\theoremstyle{plain}
\newtheorem{conjecture}{Conjecture}
\newtheorem{lemma}{Lemma}
\newtheorem{proposition}{Proposition}
\newtheorem{theorem}{Theorem}
\newtheorem{question}{Question}
\newtheorem*{claim*}{Claim}
\theoremstyle{definition}
\newtheorem{definition}{Definition}
\newcommand{\GF}{\mathrm{GF}}
\newcommand{\Sym}{\mathrm{Sym}}
\newcommand{\Alt}{\mathrm{Alt}}
\newcommand{\GL}{\mathrm{GL}}
\newcommand{\Aff}{\mathrm{Aff}}
\newcommand{\Cay}{\mathrm{Cay}}
\begin{document}

\title{Computing in permutation groups without memory}
\author{Peter J. Cameron\footnote{School of Mathematical Sciences, Queen Mary, University of London,  Mile End Road, London E1 4NS, UK and
	School of Mathematics and Statistics, University of St Andrews,	Mathematical Institute,	North Haugh, St Andrews, Fife KY16 9SS, UK.
   email: p.j.cameron@qmul.ac.uk, pjc@mcs.st-andrews.ac.uk},~Ben Fairbairn\footnote{Department of Economics, Mathematics and Statistics, Birkbeck, University of London, Malet Street, London WC1E 7HX, UK.	email: bfairbairn@ems.bbk.ac.uk}~and Maximilien Gadouleau\footnote{School of Engineering and Computing Sciences, Durham University, South Road, Durham, DH1 3LE, UK. email:m.r.gadouleau@durham.ac.uk}}
\maketitle

\begin{abstract}
Memoryless computation is a new technique to compute any function of a set of registers by updating one register at a time while using no memory. Its aim is to emulate how computations are performed in modern cores, since they typically involve updates of single registers. The memoryless computation model can be fully expressed in terms of transformation semigroups, or in the case of bijective functions, permutation groups. In this paper, we consider how efficiently permutations can be computed without memory. We determine the minimum number of basic updates required to compute any permutation, or any even permutation. The small number of required instructions shows that very small instruction sets could be encoded on cores to perform memoryless computation. We then start looking at a possible compromise between the size of the instruction set and the length of the resulting programs. We consider updates only involving a limited number of registers. In particular, we show that binary instructions are not enough to compute all permutations without memory when the alphabet size is even. These results, though expressed as properties of special generating sets of the symmetric or alternating groups, provide guidelines on the implementation of memoryless computation.
\end{abstract}

{\bf AMS Subject classification}: 20B30 (primary), 68Q10, 20B05, 20F05 (secondary)

\section{Introduction}
\subsection{Memoryless computation}

Typically, swapping the contents of two variables $x$ and $y$ requires a buffer $t$, and proceeds as follows (using pseudo-code):
\begin{eqnarray*}
    t &\gets& x\\
    x &\gets& y\\
    y &\gets& t.
\end{eqnarray*}
However, a famous programming trick consists in using XOR (when $x$ and $y$ are sequences of bits), which we view in general as addition over a vector space:
\begin{eqnarray*}
    x &\gets& x+y\\
    y &\gets& x-y\\
    x &\gets& x-y.
\end{eqnarray*}
We thus perform the swap without any use of memory. 

While the example described above (commonly referred to as the XOR swap) is folklore in Computer Science, the idea to compute functions without memory was developed by Burckel et. al. in \cite{Bur96,Bur04,BGT09,BM00,BM04a,BM04} and was then independently rediscovered and expanded by Gadouleau and Riis in \cite{GR11a}. Amongst the results derived in the literature is the non-trivial fact that any function can be computed using memoryless computation. Moreover, only a number of updates linear in the number of registers is needed: any function of $n$ variables can be computed in at most $4n-3$ updates (a result proved in \cite{BGT09} for the case where variables are boolean and extended in \cite{GR11} and \cite{BGT13} independently for any finite alphabet), and only $2n-1$ updates if the function is bijective.

Memoryless computation has the potential to speed up computations in two ways. First, it avoids time-consuming communication with the memory, thus easing concurrent execution of different programs. Second, unlike traditional computing which treats registers as ``black boxes,'' memoryless computation effectively combines the values contained in those registers. Therefore, memoryless computation can be viewed as an analogue in computing to network coding \cite{ACLY00,YLCZ06}, an alternative to routing on networks. It is then proved in \cite{GR11a} that memoryless computation uses arbitrarily fewer updates than black-box computing for a certain class of manipulations of registers. 

\subsection{Model for computing in permutation groups without memory}

Let us recall some notation and results from \cite{GR11a}. Let $A$ be a finite set, referred to as the {\em alphabet}, of cardinality $q:= |A| \ge 2$ (usually, we assume $A = \mathbb{Z}_q$ or $A = \GF(q)$ if $q$ is a prime power). Let $n \ge 2$ be an integer representing the number of registers $x_1,\ldots,x_n$. We denote $[n] = \{1,2,\ldots,n\}$. The elements of $A^n$ are referred to as {\em states}, and any state $a \in A^n$ is expressed as $a= (a_1,\ldots,a_n)$, where $a_i$ is the $i$-th coordinate or register of $a$. For any $1 \le k \le n$, the $k$-th unit state is given by $e^k = (0,\ldots,0,1,0,\ldots,0)$ where the $1$ appears in coordinate $k$. We also denote the all-zero state as $e^0$.

Although the model in \cite{GR11a} considered the computation of any transformation of $A^n$, in this paper we only consider permutations of $A^n$. For any $f \in \Sym(A^n)$, we denote its $n$ coordinate functions as $f_1,\ldots,f_n : A^n \to A$, i.e.
$f(x) = (f_1(x), \ldots,f_n(x))$
for all $x = (x_1,\ldots,x_n) \in A^n$. We say that the $i$-th coordinate function is {\em trivial} if it coincides with that of the identity: $f_i(x) = x_i$; it is nontrivial otherwise.

An {\em instruction} is a permutation $g$ of $A^n$ with exactly one nontrivial coordinate function:
$$
    g(x) = (x_1,\ldots,x_{j-1},g_j(x),x_{j+1},\ldots,x_n).
$$
We say the instruction $g$ {\em updates} the $j$-th register. We can represent this instruction as
$$
    y_j \gets g_j(y)
$$
where $y = (y_1,\ldots,y_n) \in A^n$ represents the contents of the registers. By convention, we also let the identity be an instruction, but we shall usually omit it. We denote the set of all instructions in $\Sym(A^n)$ as $\mathcal{I}(A^n)$ (or simply $\mathcal{I}$ when there is no ambiguity). For instance, $\mathcal{I}(\GF(2)^2)$ is given by
\begin{equation} \label{eq:I22} \nonumber
    \mathcal{I} = \{ (x_1+1,x_2), (x_1+x_2,x_2), (x_1+x_2+1,x_2), (x_1,x_2+1), (x_1,x_1+x_2), (x_1,x_1+x_2+1) \}.
\end{equation}

A {\em program} computing $f \in \Sym(A^n)$ is a sequence $g^{(1)},\ldots,g^{(L)} \in \mathcal{I}$ such that
$$
    f = g^{(L)} \circ \cdots \circ g^{(1)}.
$$
In other words, a program computes $f$ by updating one register at a time, without any knowledge of the input. We remark that unless $f$ is the identity, we can assume that none of the instructions in its program is the identity; furthermore, we can always assume that $g^{(k+1)}$ updates a different register to $g^{(k)}$. The shortest length of a program computing $f$ is denoted as $\mathcal{L}(f)$ and referred to as the {\em complexity} of $f$. 

With this notation, the swap of two variables can be viewed as computing the permutation $f$ of $A^2$ defined as $f(x_1,x_2) = (x_2,x_1)$, and the program is given by
\begin{eqnarray*}
    y_1 &\gets& y_1 + y_2 \qquad (= x_1 + x_2)\\
    y_2 &\gets& y_1 - y_2 \qquad (= x_1)\\
    y_1 &\gets& y_1 - y_2 \qquad (= x_2).
\end{eqnarray*}
Thus, the complexity of the swap is three instructions.

Theorem 2.4 in \cite{GR11a} indicates that the instructions generate the symmetric group $\Sym(A^n)$: any permutation can be computed without memory. Once this is established, the first natural problem is to determine how fast permutations can be computed. Theorem 3.5 in \cite{GR11a} shows that the maximum complexity of any permutation of $A^n$ is exactly $2n-1$ instructions. Moreover, the average complexity is above $2n-3$ when the alphabet is binary and $n$ is large enough.

\subsection{Smaller instruction sets}

In this paper, we investigate another natural problem for memoryless computation. The model introduced above allows us to update a register by any possible function of all the registers. In practice the very large number of possible instructions makes it hard to encode all of them on a core. Therefore, we must search for limited instruction sets which are easy to encode while still salvaging the advantages offered by memoryless computation. Before answering this engineering problem, we will determine its theoretical limit, i.e. the size of the smallest instruction set which allows us to compute any function without memory. 

Once recast in algebraic language, the design of instruction sets able to compute any transformation becomes a problem on generating sets, which has been extensively studied for transformation semigroups \cite{Hig92,How95}. It is well known that to generate the full transformation semigroup, one only needs a generating set of the symmetric group and one more transformation \cite{GM09}. Since the last transformation can be an instruction \cite{GR11a}, we only consider the symmetric group in this paper. In Theorem \ref{th:sym_alt} (a), we prove that $\Sym(A^n)$ can be generated by only $n$ instructions (unless $q=n=2$, where three instructions are needed); we also prove a similar result for the alternating group in Theorem \ref{th:sym_alt} (b). This result illustrates that memoryless computation could be practically implemented on cores.

An ``efficient'' instruction set must satisfy the following tradeoff: it should contain a relatively small number of instructions and yet yield short programs. Also, the XOR swap is seldom used in practice, mostly because the instructions it involves cannot be easily pipelined. We thus expect a good set of instructions to offer the possibility to easily pipeline instructions. We will then investigate a natural candidate for a possible generating set of instructions. We will try to compute functions using only ``local'' instructions, which only involve a limited number of registers, i.e. {\em $l$-ary} instructions.

\begin{definition} \label{def:binary}
A coordinate function $f_j: A^n \to A$ is $l$-ary if it only involves at most $l$ variables:
$$
    f_j(x) = f_j(x_{k_1},\ldots,x_{k_l})
$$
for some $k_1,\ldots,k_l \in [n]$. For $l=1,2$, we say it is unary, binary respectively. A permutation whose coordinate functions are all $l$-ary is also referred to as $l$-ary.
\end{definition}

Binary Boolean instructions are not sufficient to compute any Boolean function \cite{GR11a}. In Theorem \ref{th:l-ary}, we settle the general case. In particular, an $n$-ary instruction is required to generate $\Sym(A^n)$ when $|A|$ is even.

Another type of search for a trade-off between small instruction sets and short programs is investigated in \cite{CFG12a}, where we compute linear functions only using linear updates. Analogous results to Theorem \ref{th:sym_alt} and to the maximum complexity in \cite{GR11a} are determined for the general and special linear groups.

The rest of the paper is organised as follows. Section \ref{sec:preliminaries} gives some preliminary definitions and results. Among others, it proves that any even permutation can be computed by even instructions in most cases. Section \ref{sec:smallest} determines the minimum size of a generating set of instructions for the symmetric and alternating groups. Then, in Section \ref{sec:l-ary} we determine the groups generated by $l$-ary instructions. Finally, Section \ref{sec:open} gives some open questions and perspectives on the topic.

\section{Preliminaries} \label{sec:preliminaries}

\subsection{Action on the set of instructions by conjugation}

We remark that the set of $l$-ary permutations forms a group if and only if $l \in \{1,n\}$. Indeed, for $l=1$ the unary permutations form the group 
$$
	U := \Sym(A)\, \mbox{Wr}\, \Sym(n)
$$ 
and for $l=n$ they form $\Sym(A^n)$. Conversely, if $2 \le l \le n-1$, then it is clear that the following program uses $l$-ary instructions and yet does not compute an $l$-ary permutation:
\begin{align*}
    y_2 &\gets y_2 + y_{l+1}\\
    y_1 &\gets \sum_{i=1}^l y_i.
\end{align*}

A {\em permutation of variables} is a permutation $f^\pi$ of $A^n$ such that
$f^\pi(x) = (x_{1\pi},\ldots,x_{n\pi})$
for some $\pi \in \Sym(n)$; note that $(f^\pi)^{-1} = f^{\pi^{-1}}$. In other words, the permutations of variables represent an action of $\Sym(n)$ on $A^n$.

\begin{proposition} \label{prop:U}
The largest group acting by conjugation on the set of instructions $\mathcal{I}$ is the group $U$ of unary permutations.
\end{proposition}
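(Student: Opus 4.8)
The claim is that the setwise normaliser $N := \{ g \in \Sym(A^n) : g\mathcal{I}g^{-1} = \mathcal{I}\}$ of the instruction set under conjugation is precisely $U$. The plan is to prove the two inclusions $U \subseteq N$ and $N \subseteq U$ separately.

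For $U \subseteq N$, I would argue directly that conjugating an instruction by a unary permutation again produces an instruction. The useful reformulation is that a permutation $p$ lies in $\mathcal{I}$ and updates register $j$ exactly when $p(x)_i = x_i$ for every state $x$ and every $i \neq j$. A unary permutation $u \in U$ permutes the registers by some $\pi \in \Sym(n)$ and then acts on each register by a permutation of $A$; consequently $u$ carries any pair of states differing only in coordinate $j$ to a pair differing only in coordinate $\pi(j)$. Hence if $h \in \mathcal{I}$ updates register $j$, then for every $y$ the states $uhu^{-1}(y)$ and $y$ agree off coordinate $\pi(j)$, so $uhu^{-1}$ is an instruction updating register $\pi(j)$. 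Thus $u\mathcal{I}u^{-1} \subseteq \mathcal{I}$, and since $u$ is a bijection this is an equality, giving $u \in N$.

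The substantial direction is $N \subseteq U$, and my plan rests on isolating a conjugation-invariant combinatorial skeleton inside $\mathcal{I}$. The key observation is that the transpositions contained in $\mathcal{I}$ are exactly the transpositions $(s,t)$ with $s,t \in A^n$ differing in a single coordinate: such a transposition moves only one coordinate of the two states it swaps and fixes everything else, so it is an instruction; conversely, an instruction that is a transposition must, by the characterisation above, swap two states agreeing off a single coordinate. These transpositions are precisely the edges of the Hamming graph $H(n,q)$ on vertex set $A^n$. Now if $g \in N$, then conjugation by $g$ permutes $\mathcal{I}$ and preserves cycle type, hence permutes the transpositions in $\mathcal{I}$; since $g\,(s,t)\,g^{-1} = (g(s),g(t))$, the map $g$ sends edges of $H(n,q)$ to edges, i.e. $g \in \mathrm{Aut}(H(n,q))$.

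It then remains to invoke the classical determination $\mathrm{Aut}(H(n,q)) = \Sym(A)\,\mathrm{Wr}\,\Sym(n) = U$, and this is where the main work is concentrated. Since $H(n,q)$ is the Cartesian product of $n$ copies of the complete graph $K_q$, and $K_q$ is prime with respect to the Cartesian product (any nontrivial product $G \,\square\, H$ with $|G|,|H|\ge 2$ has non-adjacent vertices, whereas $K_q$ has none), Sabidussi's theorem on automorphisms of Cartesian products yields $\mathrm{Aut}(H(n,q)) = \mathrm{Aut}(K_q) \wr \Sym(n) = U$; one may equally cite this fact about Hamming graphs directly. Combining $g \in \mathrm{Aut}(H(n,q)) = U$ with the first inclusion gives $N = U$. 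The one point demanding genuine care is the Hamming-graph automorphism computation, i.e. ruling out any ``exotic'' graph automorphism that fails to be unary; everything else reduces to routine bookkeeping with the characterisation of instructions.
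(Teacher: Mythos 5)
Your proposal is correct, but it takes a genuinely different route from the paper's proof in the substantial direction, so a comparison is worthwhile. For the easy inclusion ($U$ normalises $\mathcal{I}$) the two arguments coincide in substance: the paper splits $U$ into permutations of variables and unary instructions and conjugates each, while you verify directly that a unary permutation carries pairs of states differing in one coordinate to pairs of the same kind; these are equivalent computations. The divergence is in the maximality direction. The paper gives an elementary, explicit construction: if $f$ is non-unary, some variable $x_i$ occurs in two coordinate functions of $f^{-1}$, and one can build an instruction $g$ updating register $i$ for which $f^{-1}gf$ alters two distinct registers and hence is not an instruction. You instead extract a conjugation-invariant skeleton: the transpositions lying in $\mathcal{I}$ are exactly the transpositions of Hamming-adjacent states, conjugation preserves cycle type, so any normalising permutation induces an automorphism of the Hamming graph $H(n,q)$; you then invoke $\mathrm{Aut}(H(n,q)) = \Sym(A)\,\mathrm{Wr}\,\Sym(n) = U$, justified by Sabidussi's unique-prime-factorisation theorem for Cartesian products together with the (correct) observation that $K_q$ is prime. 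Both arguments are sound, including in the boundary cases $q=2$ and $n=2$. The paper's proof buys self-containment: it uses nothing beyond the definitions, and it exhibits concretely, for each non-unary $f$, an instruction whose conjugate fails to be an instruction. Your proof buys conceptual clarity: it explains \emph{why} $U$ is the answer --- $U$ is the automorphism group of the natural Hamming geometry whose edges are precisely the transposition instructions --- at the cost of delegating the entire difficulty to a classical but non-trivial theorem, which you would need to cite precisely (Sabidussi's theorem, or the standard determination of the automorphism groups of Hamming graphs), since the paper provides no such machinery.
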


\begin{proof}
First, let us show that $U$ acts on $\mathcal{I}$ by conjugation. Note that $U$ is generated by the permutations of variables and the unary instructions. Let $h \in \Sym(A^n)$ be a unary instruction with nontrivial coordinate function $h_i(x_i)$, where $h_i \in \Sym(A)$; then $h^{-1}(x) = (x_1,\ldots,h_i^{-1}(x_i),\ldots,x_n)$. Let $g$ be an instruction updating register $j$, then
$$
    h^{-1}gh(x) = \begin{cases} (x_1,\ldots,g_j(h(x)),\ldots,x_n) & \mbox{if } i \ne j,\\
    (x_1,\ldots,h_i^{-1}(g_i(h(x))),\ldots,x_n) & \mbox{otherwise}. \end{cases}
$$
In both cases, it is an instruction updating register $j$.

Let $f^\pi$ be a permutation of variables, then it is easily checked that
$$
    f^{\pi^{-1}} g f^\pi(x) = (x_1,\ldots,g_j(f^\pi(x)), \ldots, x_n),
$$
where the nontrivial term appears in coordinate $j\pi^{-1}$.

Second, let us prove that any non-unary permutation does not act on $\mathcal{I}$ by conjugation. Let $f \in \Sym(A^n)$ have a non-unary coordinate function (and hence $f^{-1}$ also), then there is an $x_i$ which appears more than once in the coordinate functions of $f^{-1}$, say in $f^{-1}_1(x_i,\ldots)$ and $f^{-1}_2(x_i,\ldots)$. Then there exist two pairs of states $a^k,b^k$ ($k = 1,2$) in $A^n$ only differing in register $i$ such that $f^{-1}_k(a^k) \ne f^{-1}_k(b^k)$. Let $g$ be an instruction updating the $i$-th register such that
$g(a^1) = b^1, g(b^2) = a^2.$ Then denoting $u = f^{-1}(a^1)$ and $v = f^{-1}(b^2)$, we obtain
\begin{eqnarray*}
    f^{-1}_1 g f(u) &=& f^{-1}_1 g(a^1) = f^{-1}_1(b^1) \ne u_1\\
    f^{-1}_2 g f(v) &=& f^{-1}_2 g(b^2) = f^{-1}_2(a^2) \ne v_2.
\end{eqnarray*}
Therefore, $f^{-1} g f$ updates the first and second registers and as such is not an instruction.
\end{proof}

We remark that if $g,g'$ are $U$-conjugates (i.e. $g = hgh^{-1}$ for some $h \in U$), then $\mathcal{L}(g) = \mathcal{L}(g')$.

\subsection{Internally computable permutation groups}

Let $G \le \Sym(A^n)$ and $g \in G$. We say that $g$ is {\em computable} in $G$ if there exists a program computing $g$ consisting only of instructions from $G$. The set of elements computable in $G$ is hence given by the subgroup $\langle G \cap \mathcal{I} \rangle$. We say $G$ is {\em internally computable} if all its elements are computable therein, i.e. if $G = \langle G \cap \mathcal{I} \rangle$.

In order to illustrate how subtle the question of internal computability is, let us consider cyclic permutation groups. If $g$ is an instruction, then clearly $\langle g \rangle$ is an internally computable group. On the other hand, consider $\Sym(\{0,1,2\}^2)$; for convenience number the elements of $\{0,1,2\}^2$ in Gray code as $1:=00$, $2:=01$, $3:=02$, $4:=12$, $5:=11$, $6:=10$, $7:=20$, $8:=21$, $9:=22$ and let $g:=(123)(67)$. The only generating sets of size one for the group $\langle g \rangle$ are $g$ and $g^{-1}$ and clearly neither of these is an instruction. Simultaneously, the elements $g^2$ and $g^3$ are instructions and since $g=g^3(g^2)^{-1}$ it follows that $\langle g \rangle = \langle g^2,g^3 \rangle$ so the group $\langle g \rangle$ is indeed internally computable, despite the fact that $g$ is not an instruction.

The problem of determining whether a permutation group is internally computable can reveal some surprises. For instance, the alternating group $\Alt(\GF(2)^2)$ is not internally computable. Indeed, recall that the set of instructions is given by
$$
    \mathcal{I} = \{ (x_1+1,x_2), (x_1+x_2,x_2), (x_1+x_2+1,x_2), (x_1,x_2+1), (x_1,x_1+x_2), (x_1,x_1+x_2+1) \}.
$$
Only the instructions $(x_1+1,x_2)$ and $(x_1,x_2+1)$ are even; those instructions only generate a group of order $4$.

\begin{proposition}
The alternating group $\Alt(A^n)$ is internally computable unless $q=2$ and $n=2$ or $n=3$.
\end{proposition}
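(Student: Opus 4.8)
The plan is to show that, outside the excepted cases, the set $\mathcal{I}\cap\Alt(A^n)$ of even instructions generates a \emph{primitive} subgroup $H$ of $\Alt(A^n)$ that contains a $3$-cycle. Jordan's theorem (a primitive permutation group of degree at least $3$ containing a $3$-cycle contains the full alternating group) then forces $H\supseteq\Alt(A^n)$, and since every even instruction is an even permutation we have $H\le\Alt(A^n)$, whence $H=\Alt(A^n)$. The backbone of both the transitivity and the primitivity arguments is that the coordinate translations $x\mapsto x+e^j$ are always even instructions: such an instruction applies the $q$-cycle $+1$ to each of the $q^{n-1}$ fibres in direction $j$, so its sign is $((-1)^{q-1})^{q^{n-1}}=1$ (if $q$ is odd the base is $1$; if $q$ is even the exponent is even). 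Hence $H$ contains the regular abelian translation group $T\cong\mathbb{Z}_q^n$, so $H$ is transitive and, by the standard correspondence for a group containing a regular abelian subgroup, every block of $H$ is a coset of some additive subgroup $B\le A^n$. It remains to rule out proper nontrivial $B$ and to exhibit a $3$-cycle.

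For $q\ge 3$ and any $n\ge 2$ both tasks are easy. A permutation applying a $3$-cycle to three values of register $j$ and fixing every other fibre is an even instruction, so $H$ contains all single-fibre $3$-cycles; in particular $H$ contains a $3$-cycle, and these generate $\Alt(\ell)$ on every line (fibre) $\ell$. To break the block $B$, fix the direction-$j$ line $\ell$ through $0$: since $\Alt(\ell)\le H$ is transitive on the $q\ge 3$ points of $\ell$ and fixes every point off $\ell$, if $B$ contained a point off $\ell$ then an element $g\in\Alt(\ell)$ with $g(0)=v$ would fix that point and hence fix $B$ setwise, giving $v\in B$ for all $v\in\ell$, i.e.\ $\ell\subseteq B$. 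Thus for each direction either $\ell\subseteq B$ or $B\subseteq\ell$; ranging over the $n\ge 2$ directions forces $B=\{0\}$ or $B=A^n$, so $H$ is primitive and Jordan applies.

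For $q=2$ the basic even instructions are the \emph{double transpositions}: choosing a register $j$ and two fibres and flipping register $j$ in both yields a product of two disjoint transpositions along direction-$j$ edges, which is even. The key construction is a $3$-cycle obtained as a commutator: taking $\sigma=(a,a+e^i)(b,b+e^i)$ and $\tau=(a,a+e^j)(c,c+e^j)$ with $i\ne j$ and with the four auxiliary points disjoint from $\{a,a+e^i,a+e^j\}$ and from one another, a direct computation gives $[\sigma,\tau]=(\sigma\tau)^2=(a,a+e^i,a+e^j)$, a $3$-cycle. The disjointness requires seven distinct vertices forming two extra parallel edges that avoid a fixed corner, which can be realised exactly when $n\ge 4$. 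With such corner $3$-cycles available for every pair of directions $(i,j)$, primitivity follows from the coset description: a single $3$-cycle preserves a nontrivial block system only if its three points lie in one block, so $(a,a+e^i,a+e^j)\in H$ forces $e^i,e^j\in B$, and ranging over all pairs forces every $e^i\in B$, i.e.\ $B=A^n$. Again $H$ is primitive and contains a $3$-cycle, so Jordan gives $H=\Alt(A^n)$.

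The two excepted cases $q=2$, $n\in\{2,3\}$ are handled directly: for $n=2$ the excerpt already exhibits the even instructions generating a group of order $4$, and for $n=3$ a finite check shows $\langle\mathcal{I}\cap\Alt(A^3)\rangle$ is a proper subgroup of $\Alt(8)$. I expect the main obstacle to be the $q=2$ case: unlike $q\ge 3$, no single even instruction is a $3$-cycle, so one must manufacture one from double transpositions, and the parity constraint that an even number of fibres be flipped doubles the number of vertices consumed. This is exactly what supplies the commutator construction once $n\ge 4$ and, at the same time, pins down $n=3$ as a genuine exception rather than a mere counting artefact.
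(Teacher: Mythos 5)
Your proof is correct, but it takes a genuinely different route from the paper's. I checked the key steps: the parity computation for the translations $x\mapsto x+e^j$, the reduction of blocks to additive subgroups via the regular translation subgroup, and the commutator identity $[\sigma,\tau]=(\sigma\tau)^2=(a,a+e^i,a+e^j)$ for two double transpositions whose supports meet only in $a$ (which indeed consumes seven distinct states and is therefore realisable exactly when $n\ge4$ in the case $q=2$). The paper argues differently in both regimes. For $q\ge3$ it uses the same single-fibre $3$-cycles as even instructions, but instead of transitivity, primitivity and Jordan's theorem it invokes an equivalence-relation trick: defining $x\sim_1 y$ when $x=y$ or some $3$-cycle $(x,y,z)$ lies in $G$, the relation has the property that $x\sim_1 y$ and $y\sim_1 z$ imply $(x,y,z)\in G$, so relating states that differ in a single register already yields every $3$-cycle and hence all of $\Alt(A^n)$; this is shorter and avoids all group-action machinery. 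For $q=2$ the paper's proof is not self-contained: it defers to Theorem \ref{th:l-ary}, whose $q=2$ part ($G_3=\Alt(\GF(2)^n)$ for $n\ge4$) rests on the cited maximality theorem of Kantor and McDonough for the affine group inside the alternating group, plus a computer check for $n=3$. Your commutator-of-double-transpositions construction replaces that citation with an explicit, elementary production of a $3$-cycle, and it treats both alphabets by one uniform primitivity-plus-Jordan mechanism (Jordan's theorem being quoted by the paper anyway as a lemma); what you lose is brevity, what you gain is self-containedness and a structural explanation of why $n=3$ is tight. One small caveat: the impossibility of placing your seven points when $q=2$, $n=3$ only shows that your construction fails there, not that $\Alt(\GF(2)^3)$ fails to be internally computable; the exceptional cases themselves still rest on the direct and finite checks, which you do invoke, exactly as the paper does.
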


\begin{proof}
The proof for $q=2$ will follow Theorem \ref{th:l-ary} (the case for $q=2$, $n=3$ is checked by computer). For $q \ge 3$, let $G$ be the group generated by even instructions. Define a relation $\sim_1$ on $A^n$ by $x\sim_1 y$ if $x=y$ or there exists $z$ such
that the $3$-cycle $(x,y,z)$ is in $G$. This is an equivalence
relation, and it has the property that, if $x\sim_1 y$ and $y\sim_1 z$, then
$(x,y,z)\in G$. So if $\sim_1$ is the universal relation, then $G$ contains
every $3$-cycle, and is the alternating group.

Let $x,y$ differ in one position, i.e. $y = x + \lambda e^i$ for some $\lambda \ne 0$ and some $1 \le i \le n$. Then for any $\mu \notin \{0,\lambda\}$, $(x,y,z)$ is an even instruction, where $z = x + \mu e^i$. Thus any two states are equivalent when they differ in one register, and hence $\sim_1$ is the universal relation.
\end{proof}

Given an internally computable group $G$, one may ask two ``extreme'' questions, similar to the problems for the symmetric group.
\begin{enumerate}
    \item What is the maximum complexity of an element in $G$, i.e. the diameter of the Cayley graph $\Cay(G, G \cap \mathcal{I})$? This indicates how fast we can compute any element in $G$ if we allow any instruction.

    \item What is the smallest cardinality of a set of instructions in $G$ which generates $G$? This indicates the minimum amount of space required to store the instructions needed to compute any element of $G$.
\end{enumerate}

\subsection{Fast permutations}

For any set of instructions $\mathcal{J} \subseteq \mathcal{I}$ and any $g \in \langle \mathcal{J} \rangle$, we denote the shortest length of a program computing $g$ using only instructions from $\mathcal{J}$ as $\mathcal{L}(g,\mathcal{J})$. (Note that we only look at instructions in $\mathcal{J}$, and not all instructions in $\langle \mathcal{J} \rangle$). If $\mathcal{J} \subseteq \mathcal{K}$, we say $g$ is $(\mathcal{J},\mathcal{K})$-{\em fast} if $\mathcal{L}(g,\mathcal{J}) = \mathcal{L}(g,\mathcal{K})$.

We have the following properties: let $\mathcal{J} \subseteq \mathcal{K} \subseteq \mathcal{M}$ and $g,h \in \langle \mathcal{J} \rangle$.
\begin{enumerate}
    \item If $\mathcal{J}$ is symmetric and $g$ is $(\mathcal{J},\mathcal{K})$-fast, then so is $g^{-1}$.

    \item If $g,h$ are $(\mathcal{J},\mathcal{K})$-fast and $\mathcal{L}(gh,\mathcal{K}) = \mathcal{L}(g,\mathcal{K}) + \mathcal{L}(h,\mathcal{K})$, then $gh$ is also $(\mathcal{J},\mathcal{K})$-fast.

    \item If $g$ is $(\mathcal{J},\mathcal{M})$-fast, then $g$ is $(\mathcal{J},\mathcal{K})$-fast.

    \item If $g$ is $(\mathcal{J},\mathcal{K})$-fast and $(\mathcal{K},\mathcal{M})$-fast, then $g$ is $(\mathcal{J},\mathcal{M})$-fast.
\end{enumerate}

If all elements of an internally computable group $K$ are $(K \cap \mathcal{I}, G \cap \mathcal{I})$-fast for some $K \le G$, we say that $K$ is fast in $G$. (Typically, $G = \Sym(A^n)$). For instance, if $K$ is the group of all instructions updating a given register, then $K$ is clearly fast. On the other hand, the alternating group is never fast.

\begin{proposition} \label{prop:Alt_not_fast}
The alternating group $\Alt(A^n)$ is not fast for any $A$ and $n$.
\end{proposition}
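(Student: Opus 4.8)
The plan is to produce, for every $A$ and $n$, a single even permutation $g$ that is strictly cheaper to compute with arbitrary instructions than with even ones. Since declaring $\Alt(A^n)$ \emph{fast} (taking $G=\Sym(A^n)$, so $G\cap\mathcal{I}=\mathcal{I}$) requires \emph{every} element to satisfy $\mathcal{L}(g,\Alt(A^n)\cap\mathcal{I})=\mathcal{L}(g,\mathcal{I})$, one witness $g$ with $\mathcal{L}(g,\Alt(A^n)\cap\mathcal{I})>\mathcal{L}(g,\mathcal{I})$ suffices.

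First I would dispose of the degenerate case $q=n=2$ separately: the only even instructions in $\mathcal{I}(\GF(2)^2)$ are $(x_1+1,x_2)$ and $(x_1,x_2+1)$, which generate a group of order $4$, a proper subgroup of $\Alt(\GF(2)^2)$. Hence some even $g$ has $\mathcal{L}(g,\Alt\cap\mathcal{I})=\infty>\mathcal{L}(g,\mathcal{I})$, settling this case. For all remaining $(q,n)$ I would take $g$ to be a product of two disjoint transpositions of states living in two different register directions. Choose states $u,v$ differing only in register $1$ and states $p,r$ differing only in register $2$, with $u,v,p,r$ all distinct; this is possible precisely when $(q,n)\ne(2,2)$. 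Set $t_1=(u\,v)$ and $t_2=(p\,r)$. Each $t_i$ swaps two states differing in a single register, so it is an (odd) instruction updating that register, whence $g:=t_1t_2$ is an even permutation computed by the length-$2$ program $t_1,t_2$; moreover $\mathcal{L}(g,\mathcal{I})=2$, since $g\ne\id$ and $g$ updates both registers $1$ and $2$ (as $g_1(u)=v_1\ne u_1$ and $g_2(p)=r_2\ne p_2$), so it is not a single instruction.

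The heart of the argument is to show $\mathcal{L}(g,\Alt\cap\mathcal{I})\ge 3$, i.e. that $g$ is not a product of two even instructions. Because $g$ has trivial coordinate functions outside registers $1$ and $2$, any factorization $g=c\circ d$ into two even instructions cannot have $c,d$ updating a common register (the product would then update only that register), so $d$ and $c$ update registers $1$ and $2$ in some order. The key computation is that when $d$ updates register $1$ and $c$ updates register $2$, the factor $c$ fixes the first coordinate, so $g_1=d_1$; evaluating $g_1$ directly shows it equals the coordinate function of $t_1$, which forces $d=t_1$, an odd permutation — contradicting that $d$ is even. The symmetric sub-case forces $d=t_2$, again odd.

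I expect this coordinate-function bookkeeping to be the main point: the content is that one factor is pinned down to the odd transposition $t_1$ (or $t_2$) regardless of the other factor, so no even two-step program can exist; the remaining verifications (that $t_i$ are instructions, that an instruction is determined by its single nontrivial coordinate function, and the distinctness of the four chosen states) are routine. Combining the cases yields $\mathcal{L}(g,\Alt\cap\mathcal{I})\ge 3>2=\mathcal{L}(g,\mathcal{I})$, so $g$ is not $(\Alt\cap\mathcal{I},\mathcal{I})$-fast and $\Alt(A^n)$ is not fast.
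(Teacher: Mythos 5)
Your proof is correct, and its skeleton is the same as the paper's: exhibit a single even permutation $g$ with $\mathcal{L}(g,\mathcal{I})=2$, then kill every length-$2$ program over $\Alt(A^n)\cap\mathcal{I}$ by noting that the two instructions must update registers $1$ and $2$ in some order, so the first instruction's nontrivial coordinate function is forced to equal the corresponding coordinate function of $g$ --- which pins that instruction down to an odd one. The genuine difference is the witness and the resulting case analysis. The paper takes the $3$-cycle $g=(e^0,e^1,e^2)=(e^0,e^2)\circ(e^0,e^1)$, a product of two transpositions \emph{sharing a point}; the two candidate first moves then die for different reasons: $y_1\gets g_1(y)$ is the odd transposition $(e^0,e^1)$, while $y_2\gets g_2(y)$ is not even a permutation (it maps both $e^0$ and $e^2$ to $e^0$). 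Your witness, a product of two \emph{disjoint} transposition-instructions $t_1t_2$, makes the two cases symmetric --- each candidate first factor is forced to be $t_1$ or $t_2$, both odd --- which is arguably cleaner, but it costs you the existence of four distinct states, hence the separate treatment of $q=n=2$ (which you handle correctly via the order-$4$ group generated by the even instructions, the same observation the paper makes when showing $\Alt(\GF(2)^2)$ is not internally computable). The paper's witness buys uniformity: $(e^0,e^1,e^2)$ exists for all $q,n\ge 2$ and the argument runs with no case split. One shared and equally venial imprecision: for $q=2$, $n\in\{2,3\}$ the group $\Alt(A^n)$ is not internally computable, so $\mathcal{L}(g,\Alt(A^n)\cap\mathcal{I})$ is strictly speaking undefined rather than $\ge 3$; non-fastness is trivial there, and neither your argument nor the paper's is harmed.
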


\begin{proof}
Let $g = (e^0,e^1,e^2) = (e^0,e^2) \circ (e^0,e^1)$ be an even permutation of $A^n$. Then $\mathcal{L}(g,\Sym(A^n)) = 2$ since both transpositions are instructions. On the other hand, any program computing $g$ of length $2$ must begin with either its first or second coordinate function $g_1$ or $g_2$, i.e.
\begin{eqnarray*}
    \mbox{either}\quad y_1 \gets g_1(y) &=& y_1 + \delta(y,e^0) - \delta(y,e^1)\\
    \mbox{or}\quad y_2 \gets g_2(y) &=& y_2 + \delta(y,e^1) - \delta(y,e^2),
\end{eqnarray*}
where $\delta$ is the Kronecker delta function. However, the first corresponding instruction is the transposition $(e^0,e^1)$, while the second is not even a permutation, for it maps both $e^2$ and $e^0$ to $e^0$.
\end{proof}

The problem of determining whether a group $G$ is fast has three important special cases.
\begin{enumerate}
    \item $G = \GL(n,q)$ for $q$ a prime number. This indicates whether one can compute linear functions any faster by allowing nonlinear instructions.
	\begin{conjecture}
	$\GL(n,q)$ is fast in $\Sym(\GF(q)^n)$.
	\end{conjecture}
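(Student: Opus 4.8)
The inequality $\mathcal{L}(g, \Sym(\GF(q)^n) \cap \mathcal{I}) \le \mathcal{L}(g, \GL(n,q) \cap \mathcal{I})$ is immediate, since every linear instruction is an instruction; the entire content of the conjecture is the reverse inequality, namely that allowing nonlinear updates never shortens the computation of a linear map. The plan is to prove it by \emph{sandwiching}. From the companion paper \cite{CFG12a} I would take as known an exact formula, or at least a canonical optimal factorisation, for the linear complexity $\mathcal{L}(g, \GL(n,q) \cap \mathcal{I})$ of each $g \in \GL(n,q)$, expressed through a combinatorial invariant $B(g)$ of the matrix of $g$ (morally the number of rows in which $g$ differs from the identity, corrected upward by the presence of ``dependency cycles'' among the modified registers, exactly as the register swap forces its third instruction). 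It then suffices to prove that $B(g)$ is a \emph{lower bound for arbitrary programs}, i.e. $\mathcal{L}(g, \Sym(\GF(q)^n) \cap \mathcal{I}) \ge B(g)$; combined with the trivial inequality and with $\mathcal{L}(g,\GL(n,q)\cap\mathcal{I}) = B(g)$, this pins all three quantities to $B(g)$ and yields fastness.

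The difficulty is that $B(g)$ must bound the length of \emph{every} program, including nonlinear ones, so the lower bound cannot use linearity of the instructions, only linearity of the \emph{target} $g$. The tool I would use is a finite-difference potential (here the restriction to $q$ prime is convenient, since $\GF(q)$ is then a prime field and the additive calculus is clean). For a direction $v \in \GF(q)^n$ and a map $\phi$ write $D_v\phi(x) = \phi(x+v)-\phi(x)$; since $g$ is linear, $D_v g \equiv Mv$ is constant in $x$. Running a program $g^{(1)},\dots,g^{(L)}$ on the symbolic pair $(x, x+v)$, each instruction updates a single register and so can alter the difference pattern only in a tightly restricted way. I would define a potential measuring, at each step, the ``dependency rank'' of the current register contents on the inputs (for the linear shadow, literally the rank of the matrix recording which input directions still influence which registers), show that its required initial and final values differ by $B(g)$, and that a single instruction changes it by at most one — the point being that change-by-at-most-one should hold for \emph{any} instruction, linear or not, precisely because an instruction touches one coordinate function. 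Single linear instructions are trivially $(\mathcal{J},\mathcal{K})$-fast (length one), and the fastness framework of Section~\ref{sec:preliminaries} lets us glue fast factors whenever a factorisation is length-additive; so a secondary, more algebraic route is to show that the canonical elementary factorisation realising $B(g)$ is length-additive against $\Sym(\GF(q)^n)\cap\mathcal{I}$ and then invoke property~2 of fast permutations.

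The main obstacle, and the reason this remains a conjecture, is ruling out \emph{nonlinear shortcuts}: one must prove that no clever nonlinear instruction decreases the dependency potential by more than a linear elementary operation would, for every prime field and every $n$. Concretely, the hard step is controlling how a nonlinear coordinate update transforms the finite differences $D_v$ \emph{simultaneously} across all directions $v$ — a nonlinear $h_j$ can entangle different directions in a way a row operation cannot, and showing this entanglement can never ``save a step'' is exactly the crux where the simple potential argument threatens to break down. I would attempt this by induction on $n$, peeling off a register that is updated first (or last) in an optimal program and reducing to a computation on $n-1$ registers, after normalising by a unary element or a permutation of variables (which preserves complexity by Proposition~\ref{prop:U}); and I would run small-$q$, small-$n$ computer verification both to test the conjectured value of $B(g)$ and to guard against a counterexample before committing to the inductive scheme.
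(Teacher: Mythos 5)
The statement you were asked to prove is not a theorem of the paper at all: it appears there as a \emph{conjecture}, stated without proof, and the only supporting evidence the authors cite is partial --- Theorem 4.7 of \cite{GR11a} (permutation matrices are fast, with exact complexity $n-F+C$ achieved by linear instructions) and Proposition 6 of \cite{GR11a} (for large $q$, almost all of $\GL(n,q)$ has complexity $n$ even within the linear instructions, hence is trivially fast). So there is no paper proof to compare against, and your text, read honestly, is not a proof either: it is a research plan whose central step you yourself flag as unresolved (``the reason this remains a conjecture''). A proof attempt that ends by naming the crux it cannot cross has not proved the statement.

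Two concrete gaps in the plan itself. First, you assume from \cite{CFG12a} an exact per-element formula $\mathcal{L}(g,\GL(n,q)\cap\mathcal{I})=B(g)$ given by a combinatorial invariant; what that companion paper actually establishes is the analogue of the \emph{maximum} complexity (every matrix computable in $\lfloor 3n/2\rfloor$ linear instructions) and analogues of the generation results, not a canonical optimal factorisation for each $g$. Your sandwich argument cannot even start without that ingredient. Second, and more fundamentally, the potential argument breaks exactly where you admit it does: after a \emph{nonlinear} instruction the register contents are no longer linear functions of the input, so the ``dependency rank'' is not a rank of any matrix, and the finite differences $D_v$ of the current state need not be constant in $x$; there is no obvious monotone quantity that every instruction (linear or not) moves by at most one and whose total displacement equals the linear complexity. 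Note also that the paper's own Proposition \ref{prop:Alt_not_fast} shows fastness can fail for natural subgroups, so the ``change-by-at-most-one'' heuristic cannot be taken on faith --- whether nonlinear shortcuts exist is precisely the open question, and your proposal restates it rather than resolves it.
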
        
	Two partial results are already known. First, Theorem 4.7 in \cite{GR11a} shows that the permutation matrices are fast in the general linear group: it takes exactly $n-F+C$ instructions to compute a permutation of variables with $F$ fixed points and $C$ cycles, and this can be done via linear instructions. Secondly, Proposition 6 in \cite{GR11a} shows that for large $q$, almost all of $\GL(n,q)$ can be computed in $n$ linear instructions and hence almost all of the general linear group is fast.

    \item $G = \Sym(A^k) \times \Sym(A^{n-k})$ acting coordinatewise. The significance of this group can be explained as follows. Suppose we want to compute a function of $k$ registers only, but we have $n$ registers available. The additional $n-k$ registers can then be used as additional memory. It is known that using additional memory can yield shorter programs in some cases \cite{GR11a}. However, all the shortest programs using memory known so far ``erase'' the memory content and replace it with functions of the first $k$ registers. If $G$ is fast, then one cannot compute the original function of $k$ registers any faster without erasing some knowledge of the last $n-k$ registers. More formally, $G$ is fast if and only if the following conjecture is true.

    \begin{conjecture}
	Let $g \in \Sym(A^k)$, $h \in \Sym(A^{n-k})$ and define $f \in \Sym(A^n)$ by $(f_1(x),\ldots,f_k(x)) = g(x_1,\ldots,x_k)$ and $(f_{k+1}(x),\ldots,f_n(x)) = h(x_{k+1},\ldots,x_n)$. Then
        $$
            \mathcal{L}(f) = \mathcal{L}(g) + \mathcal{L}(h).
        $$
	\end{conjecture}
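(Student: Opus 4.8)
The inequality $\mathcal{L}(f) \le \mathcal{L}(g) + \mathcal{L}(h)$ is the easy direction, and I would dispose of it first. Any instruction of $\Sym(A^k)$ updating register $j \le k$ lifts to an instruction of $\Sym(A^n)$ by simply ignoring registers $k+1, \ldots, n$, and likewise any instruction of $\Sym(A^{n-k})$ lifts to one acting on the last block. Concatenating a shortest program for $g$ (lifted to act on the first block) with a shortest program for $h$ (lifted to act on the second block) yields a program of length $\mathcal{L}(g) + \mathcal{L}(h)$ that computes $f$, since the two groups of instructions act on disjoint registers.

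For the reverse inequality I would set $K = \Sym(A^k) \times \Sym(A^{n-k})$ and record the exact description of $K \cap \mathcal{I}$: an instruction lies in $K$ if and only if it updates a register $j \le k$ by a function of $x_1, \ldots, x_k$ only, or a register $j > k$ by a function of $x_{k+1}, \ldots, x_n$ only. Since such block instructions on distinct blocks act on disjoint registers and hence commute, any program over $K \cap \mathcal{I}$ separates into a block-1 subprogram computing $g$ and a block-2 subprogram computing $h$; this gives $\mathcal{L}(f, K \cap \mathcal{I}) = \mathcal{L}(g) + \mathcal{L}(h)$. The conjecture is therefore exactly the statement that $f$ is $(K \cap \mathcal{I}, \mathcal{I})$-fast, i.e. that the extra freedom of cross-block instructions cannot shorten the computation; asserting this for all $g, h$ is the assertion that $K$ is fast in $\Sym(A^n)$.

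My plan for the lower bound $\mathcal{L}(f) \ge \mathcal{L}(g) + \mathcal{L}(h)$ is a projection argument. Fix a shortest program $P$ for $f$ and let $L_1$ (resp. $L_2$) be the number of its instructions updating a block-1 (resp. block-2) register, so that $\mathcal{L}(f) = L_1 + L_2$; I would aim to prove $L_1 \ge \mathcal{L}(g)$ and $L_2 \ge \mathcal{L}(h)$. The tractable case for $L_1 \ge \mathcal{L}(g)$ is when no block-2 instruction reads a block-1 register (block-1 instructions may still read block-2). Then freezing the block-2 inputs to a constant $c$ makes the block-2 registers follow an input-independent schedule $c = c^{(0)}, c^{(1)}, \ldots$; substituting the current value $c^{(t)}$ for every block-2 read in each block-1 instruction turns it into a genuine instruction of $\Sym(A^k)$, and the resulting length-$L_1$ program computes $g$. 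The bound $L_2 \ge \mathcal{L}(h)$ is symmetric, needing that no block-1 instruction reads block-2.

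The main obstacle is precisely the cross-block reads that this argument forbids: in general a block-1 update may depend on the current block-2 contents \emph{and} a block-2 update on the current block-1 contents, so neither freezing direction produces an input-independent schedule and the substitution breaks down. Intuitively, a cross-read lets one block use the other as scratch memory; but since the final block-1 contents depend only on the block-1 inputs (and symmetrically), any block-2 information injected into block-1 must later be removed, and the conjecture asserts that this bookkeeping is never free. To make this precise I would look for a monotone potential — for instance the number of block-1 registers whose current coordinate function genuinely depends on some block-2 input — which is $0$ at the start and at the end, changes by at most one per instruction, and can be charged against any apparent saving. Formalising such a reversibility or ``no free scratch'' principle is where I expect the real difficulty to lie, and it is presumably why the statement is only conjectured. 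As stepping stones I would first try to show that every program can be put in cross-read-free normal form without lengthening (which would finish the proof), then settle the cases where $g$ or $h$ is a single instruction, a transposition, or a permutation of variables (using Theorem 4.7 in \cite{GR11a} together with the four properties of $(\mathcal{J},\mathcal{K})$-fast permutations listed above), and check small $q$ and $n$ by computer.
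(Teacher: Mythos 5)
You should know at the outset that this statement is not proved in the paper: it is stated as an open conjecture, whose only supporting commentary is the observation (which you reproduce correctly) that it is equivalent to the group $K=\Sym(A^k)\times\Sym(A^{n-k})$ being fast in $\Sym(A^n)$. So there is no hidden argument you failed to find, and the parts of your proposal that are actual proofs are correct and consistent with the paper's framing: the lifting/concatenation argument for $\mathcal{L}(f)\le\mathcal{L}(g)+\mathcal{L}(h)$, the description of $K\cap\mathcal{I}$, the identity $\mathcal{L}(f,K\cap\mathcal{I})=\mathcal{L}(g)+\mathcal{L}(h)$, and the freezing/substitution argument. In fact your tractable case is stronger than you claim: if no block-2 instruction reads a block-1 register, you get $L_1\ge\mathcal{L}(g)$ by freezing, as you say, but you also get $L_2\ge\mathcal{L}(h)$ \emph{without} any further hypothesis, because under that assumption the block-2 registers evolve only under the block-2 instructions, each of which restricts to an instruction of $\Sym(A^{n-k})$, and their composition must equal $h$. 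So one-way cross-read freedom already yields the full lower bound, and the only genuinely open configuration is a shortest program containing cross-reads in \emph{both} directions.

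That configuration is exactly where your proposal stops being a proof, and two of your proposed stepping stones deserve a warning. First, the ``cross-read-free normal form without lengthening'' is not a reduction of the problem but a restatement of it: by the remark above, the conjecture holds for $f$ if and only if $f$ admits a shortest program whose cross-reads go in at most one direction, so establishing the normal form is the whole difficulty. Second, a generic ``no free scratch'' or monotone-potential principle cannot suffice on its own: Proposition~\ref{prop:Alt_not_fast} shows that instructions outside a subgroup can strictly shorten the computation of elements of that subgroup (the alternating group is never fast), so any charging argument must exploit what is special about the block-diagonal structure of $K$ rather than a general reversibility heuristic, and your proposed potential (the number of block-1 registers currently depending on block-2 inputs) is not obviously monotone or chargeable against savings. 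As it stands, the inequality $\mathcal{L}(f)\ge\mathcal{L}(g)+\mathcal{L}(h)$ in the presence of two-way cross-reads remains open, exactly as in the paper; your contribution is a correct proof of the easy direction, a correct reformulation, and a sharper localisation of where the open problem lives.
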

	
    \item $G = \Sym(A^k)$ acting on the first $k$ coordinates. This might be a simpler case than the previous one.
\end{enumerate}

\section{Smallest sets of generating instructions} \label{sec:smallest}

We now turn to the problem of determining the smallest number of instructions generating the symmetric group and the alternating group. Clearly, one needs to update all $n$ registers to generate a transitive group.

\begin{theorem} \label{th:sym_alt}
\begin{enumerate}[(a)]
	\item Unless $q=n=2$, $\Sym(A^n)$ is generated by $n$ instructions.
	
	\item If $q \ge 3$, then $\Alt(A^n)$ is generated by $n$ instructions.
\end{enumerate}
\end{theorem}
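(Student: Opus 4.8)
The plan is to prove the lower bound and then match it by an explicit construction. Since $\Sym(A^n)$ and $\Alt(A^n)$ are transitive and, as already observed, a transitive group must update every register, at least $n$ non-identity instructions are needed; so it suffices to exhibit $n$ instructions that generate the whole group. My strategy is to arrange that the group $G$ generated by these $n$ instructions contains, on the one hand, a single $q^n$-cycle $P$ (which alone forces transitivity) and, on the other hand, a transposition of two $P$-consecutive states, and then to invoke the classical fact that a full cycle together with a transposition of two adjacent points generates the whole symmetric group.

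First I would build the $q^n$-cycle from instructions using an ``odometer''. Identifying $A=\mathbb{Z}_q$ and reading a state in base $q$, the increment map $P\colon x\mapsto x+1 \pmod{q^n}$ is a single $q^n$-cycle, and it factors as a product of $n$ instructions $P=c^{(n)}\circ\cdots\circ c^{(1)}$, where $c^{(1)}$ increments register $1$ and each $c^{(j)}$ (for $j\ge 2$) increments register $j$ exactly when $x_1=\cdots=x_{j-1}=0$ (the carry). Each $c^{(j)}$ updates only register $j$, so this is a legitimate set of $n$ instructions, and the carries provide genuine coupling between the registers. The transposition I want is $\tau:=(e^0,e^1)$, swapping the two $P$-consecutive states $(0,\ldots,0)$ and $(1,0,\ldots,0)$; crucially these differ in register $1$ only, so $\tau$ is itself an instruction. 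If I can ensure $P,\tau\in G$ then $G=\Sym(A^n)$ for free, with no separate primitivity argument.

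The difficulty, and the main obstacle, is that the pure odometer group is too small: the instructions $c^{(j)}$ preserve the partition of $A^n$ into register-$1$ fibres, so $G$ is imprimitive, and moreover every $c^{(j)}$ is an even permutation when $q$ is odd, so then $G\le\Alt(A^n)$ and cannot contain the odd element $\tau$. Both defects must be repaired simultaneously by modifying one instruction on register $1$ (for instance composing $c^{(1)}$ with a permutation of register $1$ supported on a single fibre, so as to break the block system and, when $q$ is odd, inject the required parity) while checking that a full cycle survives in $G$ and that $\tau$ can be recovered as a group element. This is exactly where the stated exceptions appear: for $q=n=2$ the only available long cycle is a $4$-cycle, whose accessible transposition is a diagonal rather than an adjacent one, so two instructions generate no more than the dihedral group of order $8$ and a third instruction is genuinely required. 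For part (b) the same scheme applies with $\tau$ replaced by the $3$-cycle $(e^0,e^1,2e^1)$ of three $P$-consecutive states, which is an even instruction only when $q\ge 3$ (three distinct values are needed in one fibre); keeping all $n$ instructions even then confines $G$ to $\Alt(A^n)$, and showing that the resulting $3$-cycles connect all states—equivalently that the equivalence relation $\sim_1$ of the previous proposition is universal—yields $G=\Alt(A^n)$.
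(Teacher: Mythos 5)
The core of your argument --- the ``repair'' of the odometer --- is exactly the part you never carry out, and it is not routine. You correctly observe that $\langle c^{(1)},\ldots,c^{(n)}\rangle$ is imprimitive and too small, and you propose to fix this by modifying $c^{(1)}$ on the fibre $x_2=\cdots=x_n=0$ ``while checking that a full cycle survives in $G$ and that $\tau$ can be recovered as a group element.'' But no such modification is exhibited, and the natural candidates fail. If you take $g_1 = c^{(1)}\circ\tau$ (swap first, then increment), then $g_1$ fixes $e^1$; every carry instruction $c^{(j)}$, $j\ge 2$, also fixes $e^1$, since its carry condition $x_1=\cdots=x_{j-1}=0$ fails there, so $\langle g_1,c^{(2)},\ldots,c^{(n)}\rangle$ has a fixed point and is not even transitive. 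With the opposite order $\tau\circ c^{(1)}$ transitivity is not obviously destroyed, but now neither $P$ nor $\tau$ is visibly a word in the generators: the product of the modified generators is no longer the odometer (the inserted swap interferes with the carry conditions), and you give no argument that any $q^n$-cycle lies in the group at all. This missing verification is precisely where the paper does its work, and it does it by a different route: it abandons the full-cycle idea, constructs instructions one of whose powers is a transposition (or a $3$-cycle), proves by an explicit conjugation argument that the generated group is $2$-transitive, hence primitive, and closes with Jordan's theorem --- with separate constructions for $q=2$, $q=3$, $q$ odd, $q$ even, and, for the alternating group, for each residue of $q$ modulo $6$.

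For part (b) your plan is not merely incomplete but self-contradictory when $q$ is even. A $q^n$-cycle is an \emph{odd} permutation when $q$ is even, so it cannot lie in any subgroup of $\Alt(A^n)$; hence ``keeping all $n$ instructions even'' and ``a full cycle survives in $G$'' cannot hold simultaneously, and the device of conjugating your $3$-cycle by powers of $P$ to make $\sim_1$ universal is unavailable. (Moreover, the unmodified odometer is not even a set of even instructions for even $q$: the last carry $c^{(n)}$ is a single $q$-cycle, hence odd.) So for $q\ge 4$ even --- half of the cases claimed in (b) --- you need a different mechanism for transporting a $3$-cycle around $A^n$ using only even instructions; this is exactly what the paper's parity bookkeeping and mod-$6$ case analysis accomplish. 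The parts of your write-up that do hold up are the lower bound (every register must be updated, which is also the paper's remark) and the $q=n=2$ exception: all six instructions on $\GF(2)^2$ are involutions, so any two of them generate a dihedral group of order at most $8$; that observation is correct, and is in fact a cleaner justification of the exception than the paper gives.
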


We recall a classical theorem of Jordan (see for instance
\cite[Theorem 3.3E]{DM96}).

\begin{lemma}
Let $G\leq\Sym(m)$ be primitive and suppose that $G$ contains
a cycle of length $p$ for some prime $p\leq m-3$. Then
$G=\Sym(m)$ or $\Alt(m)$.
\end{lemma}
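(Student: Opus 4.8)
The plan is to translate the hypothesis into the language of \emph{Jordan sets} and then bootstrap the single cycle into enough transitivity to force $G$ to contain $\Alt(m)$. Write $\Omega=[m]$, let $x\in G$ be the given $p$-cycle, let $\Delta\subseteq\Omega$ be its support (so $|\Delta|=p$) and $\Gamma=\Omega\setminus\Delta$ its set of fixed points (so $|\Gamma|=m-p\ge 3$). First I would record that $\Delta$ is a Jordan set: the cyclic group $\langle x\rangle$ has prime order $p$, fixes $\Gamma$ pointwise, and acts regularly, hence transitively, on $\Delta$. Thus the pointwise stabiliser $G_{(\Gamma)}$ is transitive on $\Delta$ with $2\le|\Delta|=p$, so $G$ is a primitive Jordan group whose Jordan set has a complement of size $m-p\ge 3$.

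The heart of the argument is an expansion principle for Jordan sets in a primitive group, which I would establish next. The key combinatorial fact is that overlapping Jordan sets combine: if $\Lambda_1,\Lambda_2$ are Jordan sets with $\Lambda_1\cap\Lambda_2\ne\emptyset$, then $\Lambda_1\cup\Lambda_2$ is again a Jordan set, since the subgroup generated by $G_{(\Omega\setminus\Lambda_1)}$ and $G_{(\Omega\setminus\Lambda_2)}$ fixes $\Omega\setminus(\Lambda_1\cup\Lambda_2)$ pointwise and is transitive on the overlapping union. Primitivity then prevents a proper Jordan set from being frozen: if every $G$-translate of $\Delta$ either coincided with $\Delta$ or were disjoint from it, the translates would form a nontrivial block system of blocks of size $p$ with $1<p<m$, contradicting primitivity; hence some translate meets $\Delta$ properly, and the combination lemma yields a strictly larger Jordan set. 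Iterating, $G$ acquires Jordan sets of every size between $p$ and $m-1$, which is to say the pointwise stabilisers of ever-smaller complements remain transitive; with the three spare points of $\Gamma$ this forces $G$ to be at least $3$-transitive.

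The final step is to convert this transitivity, together with the prime length of the cycle, into $G\ge\Alt(\Omega)$. I would anchor the descent on the classical small-support case: a primitive group containing a single $3$-cycle already contains the alternating group, which one proves directly by showing that the conjugates of the $3$-cycle generate a transitive normal subgroup whose ``support graph'' is connected, so that all $3$-cycles are obtained. The hard part will be exactly this last stage: ensuring the expansion terminates at the full symmetric or alternating group rather than stalling at an exceptional $2$-transitive Jordan group, such as an affine group or a projective group between $\PSL(2,q)$ and its automorphic extensions. This is where both hypotheses are spent. A cycle of \emph{prime} length restricts the available cycle types, and the demand for at least three fixed points ($p\le m-3$) excludes the near-sharply-transitive exceptional actions: for instance $\mathrm{AGL}(1,p)$ on $p$ points and $\PSL(2,p)$ on $p+1$ points both contain a $p$-cycle, but with no fixed point and one fixed point respectively, so neither satisfies $p\le m-3$. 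Managing this case analysis, rather than the mechanical expansion of Jordan sets, is where the genuine difficulty lies.
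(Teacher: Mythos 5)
First, note the benchmark here: the paper does not prove this lemma at all --- it is quoted as a classical theorem of Jordan with a citation to \cite[Theorem 3.3E]{DM96} --- so your proposal must be judged against the classical proof, and as it stands it has two genuine gaps. The first is in the expansion step. Your union lemma for overlapping Jordan sets and the block-system argument are both correct, but iterating them only produces a strictly increasing chain of Jordan sets whose sizes can skip values; nothing in the naive union argument delivers ``Jordan sets of every size between $p$ and $m-1$,'' and in particular nothing forces a Jordan set of cosize exactly $1$ or $2$ (once $|\Lambda|>m/2$ the union with a translate may jump straight to $\Omega$, which yields nothing). The statement you actually need --- a primitive group with a proper Jordan set is $2$-transitive, and then by induction on point stabilisers $(\ell+1)$-transitive where $\ell$ is the cosize --- is a nontrivial theorem in its own right (\cite[Section 7.4]{DM96}) whose proof is more delicate than the overlap-and-merge iteration you describe; asserting ``this forces $G$ to be at least $3$-transitive'' papers over it.

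The second and decisive gap is the endgame, which you name but do not carry out. High transitivity plus a Jordan set does not by itself yield $G\geq\Alt(m)$: you must either invoke a classification of $2$-transitive Jordan groups (far heavier machinery than the lemma warrants, and your dismissal of the exceptional actions only checks $\mathrm{AGL}(1,p)$ and $\PSL(2,p)$, not, say, higher-dimensional projective groups or the Mathieu groups, which are $4$- and $5$-transitive) or, as in the classical elementary proofs (Jordan, Wielandt, Dixon--Mortimer), explicitly manufacture a $3$-cycle and then apply the $3$-cycle theorem you correctly anchor on. The standard device is the one the paper itself uses elsewhere (in the proof of Theorem \ref{th:l-ary}): use primitivity and the $\geq 3$ fixed points guaranteed by $p\leq m-3$ to find two conjugates of the $p$-cycle whose supports meet in exactly one point, so that their commutator is a $3$-cycle; the primality of $p$ enters in controlling powers and supports of these conjugates. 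Your proposal ends precisely where this argument has to begin, conceding that ``the genuine difficulty lies'' in the step it omits --- so the skeleton (Jordan sets, primitivity forcing overlap, $3$-cycle anchor) is the right one, but the proof is incomplete at its two load-bearing points.
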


We first deal with small alphabets.

\begin{lemma}\label{thing}
The group $\Alt(\{0,1,2\}^n)$ is generated by $n$ instructions.
\end{lemma}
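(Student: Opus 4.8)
The plan is to apply Jordan's lemma stated above. Since $\Alt(\{0,1,2\}^n)$ acts on $m=3^n$ points and a $3$-cycle has prime length $3\le 3^n-3$ for every $n\ge 2$, it suffices to produce $n$ \emph{even} instructions that generate a group $G$ which is primitive and contains a single $3$-cycle: Jordan then forces $G=\Sym(\{0,1,2\}^n)$ or $\Alt(\{0,1,2\}^n)$, and as all generators are even we conclude $G=\Alt(\{0,1,2\}^n)$. (For $n=1$ the statement is trivial, since $\Alt(\{0,1,2\})$ is generated by the single instruction $(e^0,e^1,2e^1)$.)

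For the generators I would begin with shear instructions: for $i=2,\ldots,n$ let $g_i$ update register $i$ by $x_i\gets x_i+x_{i-1}$. For each fixed choice of the other registers this adds a nonzero constant to $x_i$ exactly when $x_{i-1}\ne 0$, so $g_i$ is a product of disjoint $3$-cycles and hence even. These $n-1$ shears generate the lower unitriangular linear group, which leaves the first coordinate $x_1$ invariant; consequently the remaining instruction, updating register $1$, must both move $x_1$ across all fibres and be non-affine. The latter is essential because an affine permutation of $\{0,1,2\}^n$ fixes $3^k$ points for some $k$ and so can never be a $3$-cycle, which fixes $3^n-3$ points. A natural candidate is an even instruction $g_1$ that agrees with the affine map $x_1\gets x_1+x_n+1$ on all but one fibre, perturbed on a single fibre so as to be non-affine while remaining a product of $3$-cycles.

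For primitivity I would aim to show that $G$ contains the primitive affine group $T\rtimes\SL(n,3)$, where $T$ is the full translation group. The linear parts of the shears are the transvections $I+E_{i,i-1}$, and together with the wrap-around transvection $I+E_{1,n}$ carried by the affine part of $g_1$ they correspond to a single directed cycle on the index set; by the Steinberg commutator relations $[\,I+E_{i,j},I+E_{j,k}\,]=I+E_{i,k}$ these generate all of $\SL(n,3)$, which acts irreducibly on $\mathbb{Z}_3^n$. Irreducibility promotes any single nonzero translation in $G$ to the whole of $T$, so $T\rtimes\SL(n,3)\le G$; this affine group is primitive because its point stabiliser $\SL(n,3)$ acts irreducibly, and any transitive overgroup of a primitive group is primitive, so $G$ is primitive and in particular transitive.

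The hard part will be twofold. First, because $g_1$ is non-affine it has no honest linear part, so the wrap transvection $I+E_{1,n}$ is only available after the perturbation has been stripped off, i.e.\ after recovering the affine element $x_1\gets x_1+x_n+1$ inside $G$. Second, and most delicately, I must extract a genuine \emph{single} $3$-cycle: commuting the non-affine part of $g_1$, which is supported on one short fibre, with affine elements of $G$ localises the support to a bounded number of points, but a naive commutator produces a product of parallel $3$-cycles rather than one, so a further reduction using the transitivity (indeed primitivity) of $G$ is needed to isolate a lone $3$-cycle. Once such a $3$-cycle lies in the primitive group $G$, Jordan's lemma applies and the evenness of the generators forces $G=\Alt(\{0,1,2\}^n)$, completing the proof.
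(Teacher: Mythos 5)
There is a genuine gap here --- in fact two, both of which you flag yourself as ``the hard part'' but leave unresolved, so the proposal is not a proof. First, the primitivity argument is circular as stated: your Steinberg-relations route to $\SL(n,3)$ needs the wrap-around transvection $I+E_{1,n}$, which is the linear part of the \emph{unperturbed} affine map $a\colon x_1\gets x_1+x_n+1$; but $a$ is not among your generators --- only the perturbed $g_1=a\epsilon$ is --- and since $g_1$ is not affine, the group $G$ it generates together with the shears is not contained in the affine group, so there is no linear-part homomorphism to exploit. You give no mechanism for recovering $a$ (equivalently, for stripping off the perturbation $\epsilon$) inside $G$. Second, Jordan's lemma requires an honest $3$-cycle, and none of your generators is one: each is a product of many (on the order of $3^{n-1}$) disjoint $3$-cycles. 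Your plan --- commutators to localise support, then ``a further reduction using the transitivity (indeed primitivity) of $G$'' --- is not an argument: transitivity or primitivity does not convert a product of parallel $3$-cycles into a single one; extracting elements of tiny support from a primitive group is exactly the kind of conclusion that needs Jordan-type theorems, which would be circular here. Note also that your two gaps are really one: if you could recover $a\in G$, then $a^{-1}g_1=\epsilon$ would itself be the lone $3$-cycle; without $a$ you have neither primitivity nor the $3$-cycle.

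The paper makes both problems vanish by a better choice of generators: it takes the first instruction to \emph{be} a $3$-cycle. Your own parenthetical remark for $n=1$ contains exactly this idea, and you failed to notice it works for every $n$: the $3$-cycle $(e^0,e^1,2e^1)$ is an instruction in $\Sym(\{0,1,2\}^n)$ for all $n$, since its three states agree in every register but the first. The paper takes $\pi_1$ to be such a $3$-cycle (on the all-zero fibre of the last register) and chooses the other $n-1$ instructions as explicit products of $3$-cycles; Jordan's hypothesis is then automatic, all generators are even, and the only remaining work is to show by induction on $n$ that these $n$ instructions generate a $2$-transitive (hence primitive) group. Your shear-plus-perturbed-affine design, by contrast, concentrates all the difficulty into precisely the steps you could not complete.
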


\begin{proof}
The case $n=2$ is easily dealt with separately, so we shall assume
that $n>2$.

We order the elements of $A^n$ lexicographically and number them
accordingly (so for instance if $n=3$ then the elements in order are
000=:1, 001=:2, 002=:3, 010=:4, 011=:5,$\ldots$). We define the
permutations $\pi_1,\ldots,\pi_n$ as follows. First $\pi_1=(1,2,3)$,
so $\pi_i$ only updates the rightmost register. For $2\leq i\leq n$
the permutation $\pi_i$ is the unique instruction updating the
$i^{th}$ register that is a product of $3^{n-1}$ cycles of length 3
all but the last of which, when written in the usual cycle notation,
lists states in lexicographic order. For example, if $n=3$ these
permutations are
\begin{align*}
    \pi_1 &= (1,2,3),\\
    \pi_2 &= (1,4,7)(2,5,8)(3,6,9)(10,13,16)(11,14,17)\\
    & \hspace{20mm} (12,15,18)(19,22,25)(20,23,26)(27,24,21) \,\mbox{and}\\
    \pi_3 &= (1,10,19)(2,11,20)(3,12,21)(4,13,22)(5,14,23)\\
    & \hspace{20mm} (6,15,24)(7,16,25)(8,17,26)(27,18,9).
\end{align*}

We claim that these permutations generate a 2-transitive group. It
is easy to see that they generate a transitive group. We suppose
$n\geq4$ and proceed by induction, the case $n=3$ be easily verified
by computer. By hypothesis the group generated by
$\pi_1,\ldots,\pi_{n-1}$ gives all permutations of the points the
points $\{1,\ldots,q^{n-1}\}$. In particular we have the
permutations $(i,i+1,i+2)$ for $1\leq i\leq q^{n-1}-3$ and thus the
permutations $(i,i+1,i+1)^{\pi_n^j}$ for $1\leq i\leq q^{n-1}-4$ and
$1\leq j\leq q-1$. Furthermore, we also have the permutations
$(1,2,q^{n-1})^{\pi_n^j}$ for $1\leq j\leq q-1$ which altogether
gives us enough permutations to act transitively on the stabilizer
of any point in $\{1,\ldots,q^{n-1}\}$.

We have shown that $\pi_1,\ldots,\pi_n$ generate a group that is
2-transitive and therefore primitive. Since $\pi_1$ is a 3-cycle we
can now apply Jordan's Theorem to conclude that $\pi_1,\ldots,\pi_n$
generates the whole of Alt($A^n$).
\end{proof}

\begin{lemma} \label{lemma:sym_2}
The group $\Sym(\{0,1\}^n)$ is generated by $n$ instructions.
\end{lemma}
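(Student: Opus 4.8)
The plan is to prove that $\Sym(\{0,1\}^n)$ is generated by $n$ instructions, mirroring the strategy used in Lemma~\ref{thing} for the ternary case but now targeting the full symmetric group. Since $q=2$, each register can only be complemented or written as a (possibly affine) Boolean function of the others, so the natural candidate instructions are $n$ carefully chosen updates whose composite action generates a large transitive group. First I would fix $n$ to be reasonably large (handling the small cases $n=2,3$ directly by computer or by hand, exactly as the excerpt does for the alternating group) and construct explicit instructions $\pi_1,\ldots,\pi_n$, with $\pi_1$ updating the rightmost register and each $\pi_i$ updating register $i$. The key difference from Lemma~\ref{thing} is that over $\GF(2)$ we want the generated group to be all of $\Sym(\{0,1\}^n)$, not merely the alternating group, so at least one generator must be an odd permutation; I would choose $\pi_1$ (or a suitably modified $\pi_i$) so that the resulting permutation has the right parity and also supplies a short cycle of prime length.

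The main engine of the argument will again be Jordan's theorem: if the group $G=\langle\pi_1,\ldots,\pi_n\rangle$ is primitive and contains a cycle of prime length $p\le 2^n-3$, then $G$ is either $\Sym(\{0,1\}^n)$ or $\Alt(\{0,1\}^n)$. I would establish primitivity by proving $2$-transitivity via induction on $n$, assuming (as in Lemma~\ref{thing}) that $\pi_1,\ldots,\pi_{n-1}$ already generate the full symmetric group on the first $2^{n-1}$ points, and then using conjugates of short cycles by powers of $\pi_n$ to connect the two halves $\{1,\ldots,2^{n-1}\}$ and $\{2^{n-1}+1,\ldots,2^n\}$. Because $q=2$ we only have $\pi_n^0$ and $\pi_n^1$ available (the cyclic factor has order $2$, not $q-1=1$ worth of nontrivial powers), so the counting in the inductive step is tighter than in the ternary case and must be checked carefully to guarantee that the point stabiliser acts transitively on the remaining points.

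Once primitivity is in hand, Jordan's theorem pins $G$ down to $\Sym$ or $\Alt$, and the final step is to rule out $\Alt$ by a parity computation: since at least one of the chosen instructions is an odd permutation of $\{0,1\}^n$, the group $G$ contains an odd permutation and hence cannot be contained in the alternating group, forcing $G=\Sym(\{0,1\}^n)$. I expect the main obstacle to be the inductive $2$-transitivity step under the binary constraint: with only a single nontrivial power of $\pi_n$ at our disposal, ensuring that the available $3$-cycles and their $\pi_n$-conjugates genuinely suffice to move the stabiliser of a point transitively over all $2^n-1$ remaining points requires a delicate choice of the cycle structure of each $\pi_i$ and an explicit verification that the "bridging" cycles (the analogues of $(1,2,q^{n-1})$ in Lemma~\ref{thing}) link the lexicographic blocks correctly. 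The small base cases $n=2,3$, where $2^n$ is too small for Jordan's hypothesis $p\le m-3$ to apply with a $3$-cycle, will likely need separate direct treatment.
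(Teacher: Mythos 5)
Your overall skeleton --- explicit instructions, induction on $n$ to get $2$-transitivity and hence primitivity, Jordan's theorem, and a parity argument to rule out the alternating group --- is exactly the strategy of the paper, so the route is not the problem. The problem is that the mathematical content of the lemma is missing: you never construct the instructions, and for $q=2$ the construction \emph{is} the lemma. The paper orders $\{0,1\}^n$ by a Gray code (not lexicographically), takes $\pi_1=(1,2)$ --- a single transposition of two Gray-adjacent words, which is an instruction precisely because consecutive Gray-code words differ in one register --- and for $i\ge 2$ takes $\pi_i$ to be the bit-flip derangement of the $i$-th register with its first transposition of adjacent words deleted. This choice is what makes the inductive step even well-posed: the first $2^{n-1}$ Gray words are exactly those with leading bit $0$, the instructions $\pi_1,\ldots,\pi_{n-1}$ preserve that half and restrict on it to the $(n-1)$-register generators, and $\pi_n$ (the reflection pairing word $k$ with word $2^n+1-k$) bridges the two halves. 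Saying the instructions must be ``carefully chosen'' and that the verification is ``delicate'' defers precisely the step that constitutes the proof; a referee could not reconstruct the generators from your text.

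There is also a concrete error: you propose to settle $n=2$ ``by computer or by hand'', but for $q=n=2$ the statement is false, which is exactly why Theorem~\ref{th:sym_alt}(a) excludes that case. Over $\{0,1\}$, an instruction updating register $j$ has the form $y_j \gets y_j \oplus h(\mbox{other registers})$ for an arbitrary Boolean function $h$, so every instruction is an involution (a product of disjoint transpositions); two involutions generate a dihedral group, of order at most $8$ inside $\Sym(4)$, so two instructions can never generate $\Sym(\{0,1\}^2)$, which has order $24$. Thus $n=2$ is not a base case to verify but an exception to recognise, and your induction must instead start from a machine-checked $n=3$ (where your worry about Jordan's hypothesis is unfounded: with a transposition one needs $2\le 2^n-3$, which holds for $n=3$; the bound $3\le m-3$ is only relevant to the $3$-cycle argument in Lemma~\ref{thing}). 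The involution property you half-state but never use is also the reason any ``short cycle of prime length'' inside a binary instruction is forced to be a transposition, and it is the structural fact that makes $q=2$ require this separate lemma in the first place.
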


\begin{proof}
We define a set of permutations $\pi_1,\ldots,\pi_n$ as follows. We
order words in $\{0,1\}^n$ in the usual Gray code ordering and label
them $1,\ldots,2^n$ (so for instance if $n=3$ the elements in order are $000=: 1$, $001 =: 2$, $011 =:3$, $010 =: 4$, $110 =: 5$). We now define $\pi_1:=(1,2)$. For $2\leq i\leq
n$ we construct $\pi_i$ by taking the derangement that updates the
$i^{th}$ register and removing the first cycle that interchanges two
adjacent words. For example, if $n=3$ then
\begin{align*}
    \pi_1 &= (1,2)\\
    \pi_2 &= (1,4)(5,8)(6,7) \,\mbox{and}\\
    \pi_3 &= (1,8)(2,7)(3,6).
\end{align*}
A straightforward induction analogous to the proof of Lemma \ref{thing}
now enables us to show that the above generate a primitive group
containing a transposition and so we apply Jordan's Lemma to show
that the above generate the whole of $\Sym(A^n)$.
\end{proof}

\begin{lemma}
If $q>2$ is odd then Sym($A^n$) is generated by $n$ instructions.
\end{lemma}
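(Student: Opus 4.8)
The plan is to imitate the construction of Lemma~\ref{thing}, producing $n$ instructions $\pi_1,\dots,\pi_n$ (each updating a distinct register, with $\pi_1$ updating the rightmost) that generate a $2$-transitive, hence primitive, subgroup of $\Sym(A^n)$, and then to invoke the theorem of Jordan recalled above. The essential new feature compared with the alternating case is that I want an \emph{odd} permutation among my generators, so that Jordan's theorem yields $\Sym(A^n)$ rather than merely $\Alt(A^n)$. The cleanest device is to take $\pi_1$ to be the transposition $(1,2)$ swapping the two lexicographically first states (which differ only in the rightmost register): this is a legitimate instruction, it is odd, and, being a cycle of the prime length $2\le q^n-3$, it is exactly the cycle required to apply Jordan. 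Thus, once primitivity is established, Jordan forces the group to be $\Sym(A^n)$ or $\Alt(A^n)$, and the presence of the odd instruction $\pi_1$ rules out the latter.

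For the remaining generators I would take $\pi_2,\dots,\pi_n$ to be the instructions cyclically incrementing each of the other registers (each a product of $q^{n-1}$ cycles of length $q$), twisting the final cycle exactly as in Lemma~\ref{thing} so that the block structure coming from the lexicographic ordering is destroyed. Primitivity would then be proved by the same induction on $n$ as in Lemma~\ref{thing}: writing $B_0$ for the block of states whose leading register is $0$, the instructions $\pi_1,\dots,\pi_{n-1}$ preserve every such block and act on $B_0$ as the $(n-1)$-register construction, so by the inductive hypothesis they generate the full symmetric group $\Sym(B_0)$. Conjugating the transpositions available on $B_0$ by powers of $\pi_n$ (which carries points of $B_0$ into the other blocks) then supplies enough transpositions to show that the stabiliser of a point acts transitively on the remaining $q^n-1$ points, giving $2$-transitivity of $\langle\pi_1,\dots,\pi_n\rangle$. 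The base cases of small $n$, where the block is too small for this bookkeeping, would be checked directly as in Lemma~\ref{thing}.

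The main obstacle is this $2$-transitivity step, and specifically the correct choice of twist in $\pi_n$: without it the generators preserve the lexicographic block system and the group is merely imprimitive, so the whole argument collapses. Everything else is routine. I would note in passing that, since $q$ is odd, each increment $\pi_i$ with $i\ge 2$ is a product of $q^{n-1}$ even $q$-cycles and is therefore itself even; hence the oddness of the generated group is controlled entirely by $\pi_1$, confirming that replacing the $3$-cycle of Lemma~\ref{thing} by a transposition is precisely what upgrades the conclusion from $\Alt(A^n)$ to $\Sym(A^n)$.
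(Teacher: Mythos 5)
Your overall strategy (a transposition to feed Jordan's theorem, plus twisted cyclic increments and an induction for $2$-transitivity) is the same as the paper's, but your specific choice of $\pi_1$ breaks the argument at the very first step: the group you generate is not even transitive. Your $\pi_2,\dots,\pi_n$ update only the registers other than the rightmost one, so they preserve the value of the rightmost register; and your $\pi_1$, being the bare transposition of the two states $(0,\dots,0,0)$ and $(0,\dots,0,1)$, fixes every state whose rightmost entry is at least $2$. Consequently the set $\{x\in A^n : x_n=2\}$, which is nonempty and proper since $q\ge3$, is invariant under all $n$ generators, so $\langle\pi_1,\dots,\pi_n\rangle$ is intransitive --- it cannot be $2$-transitive, primitive, or all of $\Sym(A^n)$. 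The same defect propagates through your induction: the restriction of $\pi_1,\dots,\pi_{n-1}$ to the block $B_0$ is intransitive for the same reason, so the inductive claim that it generates $\Sym(B_0)$ is false. Note why the template of Lemma~\ref{thing} does not transfer: there $q=3$, so the $3$-cycle $\pi_1=(1,2,3)$ happens to run through \emph{all} $q$ values of the rightmost register on the axis fiber; a transposition covers only two of the $q$ values, and nothing else in your generating set can supply the missing ones.

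The missing idea --- and the one the paper uses --- is to make the generator that carries the transposition also act as a full cyclic increment away from the axis fiber, and then to \emph{extract} the bare transposition as a power. Concretely, the paper's $\pi_1$ increments its register cyclically on every state having a nonzero entry in some other register, acts as the transposition $(0,0,\dots,0)\leftrightarrow(1,0,\dots,0)$ on the axis fiber, and fixes the remaining axis points. This $\pi_1$ contributes to transitivity together with $\pi_2,\dots,\pi_n$, while all of its cycles have length $q$ except the single transposition; since $q$ is odd, the power $\pi_1^{q}$ kills the $q$-cycles and preserves the transposition, so it \emph{is} the transposition you wanted. The paper then obtains $2$-transitivity by conjugating this transposition around by the $\pi_i$'s, and finishes with Jordan's theorem exactly as you propose. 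If you amend your construction so that $\pi_1$ has this shape (transposition on the axis fiber, $q$-cycles off it), the rest of your outline --- the parity bookkeeping, the twist destroying the block system, and the appeal to Jordan --- goes through essentially as you describe.
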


\begin{proof}
We define the permutation $\pi_1:A^n\rightarrow A^n$ as follows. For
$a:=(a_1,a_2,\ldots,a_n)$ we have that
$$
\pi_1:a\mapsto\left\{\begin{array}{ll}
(1-a_1,0,\ldots,0)&\mbox{if }a_1\in\{0,1\}\mbox{ and }a_2=\cdots=a_n=0\\
a&\mbox{if }a_1>1\mbox{ and }a_2=a_3=\cdots=a_n=0\\
(0,a_2,\ldots,a_n)&\mbox{if }a_1=q-1\mbox{ and }a_i\not=0\mbox{ for
some }2\leq i\leq n\\
(a_1+1,a_2,\ldots,a_n)&\mbox{otherwise.}\\
\end{array}\right.
$$
For $2\leq r\leq n$ we define the permutation $\pi_r:A^n\rightarrow
A^n$ as follows.
$$
\pi_r:a\mapsto\left\{\begin{array}{ll}
(a_1,\ldots,a_{r-1},0,a_{r+1},\ldots,a_n)&\mbox{if }a_r=q-1\mbox{
and }a_i\not=q-1\mbox{ for some }i\not=r\\
(a_1,\ldots,a_{r-1},a_r+1,a_{r+1},\ldots,a_n)&\mbox{if
}a_r\not=q-1\mbox{ and }a_i\not=q-1\mbox{ for some }i\not=r\\
(a_1,\ldots,a_{r-1},q-1,a_{r+1},\ldots,a_n)&\mbox{if }a_r=0\mbox{
and }a_i=q-1\mbox{ for all }i\not=r\\
(a_1,\ldots,a_{r-1},a_r-1,a_{r+1},\ldots,a_n)&\mbox{otherwise.}
\end{array}\right.
$$

For example, if $q=3$ and $n=3$ then, writing the elements of $A^n$
as a series of grids, we have that the permutation $\pi_1$ is

\hspace{30mm} \setlength{\unitlength}{5mm}
\begin{picture}(0,2)
\put(-0.5,0.5){$a=(a_1,a_2,0)$} \put(4.5,0.5){$a=(a_1,a_2,1)$}
\put(9.5,0.5){$a=(a_1,a_2,2)$}

\multiput(0,0)(0,-1){4}{\line(1,0){3}}
\multiput(0,0)(1,0){4}{\line(0,-1){3}} \put(0.5,-0.5){\line(1,0){1}}
\multiput(0.5,-1.5)(0,-1){2}{\vector(1,0){2}}

\multiput(5,0)(0,-1){4}{\line(1,0){3}}
\multiput(5,0)(1,0){4}{\line(0,-1){3}}
\multiput(5.5,-0.5)(0,-1){3}{\vector(1,0){2}}

\multiput(10,0)(0,-1){4}{\line(1,0){3}}
\multiput(10,0)(1,0){4}{\line(0,-1){3}}
\multiput(10.5,-0.5)(0,-1){3}{\vector(1,0){2}}
\end{picture}
\vspace{20mm}

\noindent whilst the permutation $\pi_2$ is

\hspace{30mm} \setlength{\unitlength}{5mm}
\begin{picture}(0,2)
\put(-0.5,0.5){$a=(a_1,a_2,0)$} \put(4.5,0.5){$a=(a_1,a_2,1)$}
\put(9.5,0.5){$a=(a_1,a_2,2)$}

\multiput(0,0)(0,-1){4}{\line(1,0){3}}
\multiput(0,0)(1,0){4}{\line(0,-1){3}}
\multiput(0.5,-0.5)(1,0){3}{\vector(0,-1){2}}

\multiput(5,0)(0,-1){4}{\line(1,0){3}}
\multiput(5,0)(1,0){4}{\line(0,-1){3}}
\multiput(5.5,-0.5)(1,0){3}{\vector(0,-1){2}}

\multiput(10,0)(0,-1){4}{\line(1,0){3}}
\multiput(10,0)(1,0){4}{\line(0,-1){3}}
\multiput(10.5,-0.5)(1,0){2}{\vector(0,-1){2}}
\put(12.5,-2.5){\vector(0,1){2}}
\end{picture}
\vspace{20mm}

\noindent and similarly for $\pi_3$.

We claim that $G:=\langle \pi_1,\ldots,\pi_n\rangle$ is
2-transitive. It is easy to see that $G$ acts transitively on $A^n$.
We will show that the stabilizer of the state $(q-1,q-1,\ldots,q-1)$ is
transitive on the remaining states.

Note that every cycle of $\pi_1$ apart from one has length $q$,
which is odd, the remaining cycle being a transposition. It follows
that $\tau:=\pi_1^{q-1}$ is a transposition. It is easy to see that
repeatedly conjugating $\tau$ by the various $\pi_i$s we have enough
elements for the stabilizer of $(q-1,q-1,\ldots,q-1)$ to act transitively
on the remaining points, that is, $G$ acts 2-transitively.

Since any 2-transitive action is primitive it follows that $G$ acts
primitively and since $\tau\in G$ is a transposition, Jordan's
Theorem tells us that $G=\mbox{Sym}(A^n)$.
\end{proof}

\begin{lemma}
If $q>2$ is even then Sym($A^n$) is generated by $n$ instructions.
\end{lemma}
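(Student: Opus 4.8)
The plan is to follow the same skeleton as the preceding lemmas: exhibit $n$ instructions $\pi_1,\ldots,\pi_n$, show that $G:=\langle\pi_1,\ldots,\pi_n\rangle$ is $2$-transitive (hence primitive), exhibit a transposition in $G$, and finish with Jordan's Theorem (a primitive group containing a $2$-cycle, and $2\le q^n-3$ since $q\ge 4,n\ge 2$, is $\Sym(A^n)$ or $\Alt(A^n)$; the transposition is odd, so $\Alt$ is excluded). The genuinely new feature of the even case, and the reason it needs its own lemma, is a parity obstruction: the instruction $\pi_1$ used for odd $q$ acts on each non-special fibre as a full ``increment'' cycle of length $q$, which is now \emph{even}, and one cannot isolate a single transposition from an even cycle by taking powers. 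Indeed, no power of a permutation whose only short cycle is a transposition and whose remaining cycles have even length $q$ can equal that transposition, since such a power would have to be simultaneously a multiple of $q$ (even) and odd. So $\pi_1$ must be redesigned so that all its cycles other than one transposition have \emph{odd} length.

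Concretely, I would keep $\pi_2,\ldots,\pi_n$ exactly as in the previous (odd $q$) lemma, whose definitions make sense for every $q$, and replace $\pi_1$ by the instruction
\[
\pi_1 : a \mapsto
\begin{cases}
(1-a_1,0,\ldots,0) & \text{if } a_1\in\{0,1\},\ a_2=\cdots=a_n=0,\\
a & \text{if } a_1\geq 2,\ a_2=\cdots=a_n=0,\\
a & \text{if } a_1=0 \text{ and } a_i\neq 0 \text{ for some } i\geq 2,\\
(1,a_2,\ldots,a_n) & \text{if } a_1=q-1 \text{ and } a_i\neq 0 \text{ for some } i\geq 2,\\
(a_1+1,a_2,\ldots,a_n) & \text{otherwise.}
\end{cases}
\]
On the fibre $a_2=\cdots=a_n=0$ this acts as the transposition $(e^0,e^1)$ and fixes the remaining $q-2$ states; on every other fibre it acts as the $(q-1)$-cycle $1\to 2\to\cdots\to q-1\to 1$ on the nonzero values of $a_1$, fixing $a_1=0$. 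Thus $\pi_1$ is a single transposition together with $(q-1)$-cycles and fixed points, all of odd length, so $\tau:=\pi_1^{q-1}$ kills every $(q-1)$-cycle while preserving the transposition (as $q-1$ is odd), giving $\tau=(e^0,e^1)\in G$.

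For transitivity I would argue as in the previous lemma that $\pi_2,\ldots,\pi_n$ move the last $n-1$ coordinates around transitively, while on any fibre with some $a_i\neq 0$ the instruction $\pi_1$ cycles the first coordinate through all nonzero values; the transposition on the zero-fibre then bridges the value $a_1=0$ to the remaining first-coordinate values (route a state with $a_1=0$ to the zero-fibre using $\pi_2,\ldots,\pi_n$, apply $\pi_1$ to reach $a_1=1$, then leave the fibre), so $G$ is transitive on $A^n$. For $2$-transitivity I would repeatedly conjugate $\tau$ by the generators, exactly as in the odd case, to produce enough transpositions for a point-stabiliser to act transitively on the remaining points; since $\tau$ is a genuine transposition here too, this step is formally identical to the previous lemma. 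Having established primitivity and the presence of the transposition $\tau$, Jordan's Theorem yields $G=\Sym(A^n)$.

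The step I expect to be the main obstacle is reconciling the two competing demands on $\pi_1$: the extraction of a single transposition forces every other cycle to have odd length $q-1$, which means $\pi_1$ must fix one first-coordinate value on each non-special fibre, and such a fixed value is a priori unreachable because no $\pi_r$ with $r\ge 2$ alters the first register. The resolution is precisely the transposition planted on the zero-fibre, which provides the unique bridge restoring connectivity; verifying carefully that this single bridge together with the $\pi_r$ really does give transitivity, and that the conjugation argument for $2$-transitivity survives the change from $q$-cycles to $(q-1)$-cycles, is where the care is needed.
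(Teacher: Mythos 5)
Your $\pi_1$ is in fact exactly the paper's, and the extraction of $\tau=\pi_1^{q-1}$ and the transitivity argument are sound. The fatal gap is the decision to reuse $\pi_2,\ldots,\pi_n$ from the odd-$q$ lemma. Those instructions move coordinate $r$ by $+1$ or $-1$ modulo $q$ on \emph{every} state (increment on ordinary fibres, decrement on the special fibre), and when $q$ is even every such step, including the wrap-arounds $q-1\mapsto 0$ and $0\mapsto q-1$, changes the parity of $a_r$. Since no other generator touches coordinate $r$, each generator either preserves or interchanges the two sets $B_0=\{a\in A^n: a_r \mbox{ even}\}$ and $B_1=\{a\in A^n: a_r \mbox{ odd}\}$ (for any fixed $r\ge 2$). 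Hence the group $G$ you generate preserves a nontrivial partition into two blocks of size $q^n/2$: it is imprimitive, in particular not $2$-transitive, and certainly not $\Sym(A^n)$. Concretely, for $q=4$, $n=2$, labelling $(a_1,a_2)$ by $4a_1+a_2$, your generators are $\pi_1=(0\,4)(5\,9\,13)(6\,10\,14)(7\,11\,15)$ and $\pi_2=(0\,1\,2\,3)(4\,5\,6\,7)(8\,9\,10\,11)(12\,15\,14\,13)$; the first fixes $a_2$ and the second always flips its parity, so $\langle\pi_1,\pi_2\rangle$ lies in the stabiliser of the partition $\{B_0,B_1\}$. Your step ``formally identical to the previous lemma'' is exactly where this bites: every conjugate of $\tau$ is a transposition joining two states with the same parity of $a_2$, so no collection of such conjugates can make a point stabiliser transitive on the remaining points. (For odd $q$ no such invariant exists, because there the wrap-around $q-1\mapsto 0$ joins two even values; that is why the odd-$q$ construction cannot be transplanted.)

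This is precisely why the paper's even-$q$ lemma redefines the $\pi_r$ for $r\ge 2$ rather than only $\pi_1$: on the special fibre (all other coordinates equal to $q-1$) the paper's $\pi_r$ acts as the $(q-1)$-cycle $0\to 1\to\cdots\to q-2\to 0$ and \emph{fixes} $a_r=q-1$. The single transition $q-2\mapsto 0$ preserves parity, which destroys the invariant above while keeping $\pi_r$ an instruction; the $2$-transitivity and Jordan steps then go through. So the correct moral is that the even case forces \emph{two} parity repairs, one on $\pi_1$ (to allow a transposition to be extracted by taking a power, which you found) and one on each $\pi_r$, $r\ge 2$ (to avoid the mod-$2$ block system, which you missed); your proposal carries out only the first.
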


\begin{proof}
We define the permutation $\pi_1:A^n\rightarrow A^n$ as follows. For
$a:=(a_1,a_2,\ldots,a_n)$ we have that
$$
\pi_1:a\mapsto\left\{\begin{array}{ll}
(1-a_1,0,\ldots,0)&\mbox{if }a_1\in\{0,1\}\mbox{ and }a_2=\cdots=a_n=0\\
a&\mbox{if }a_1>1\mbox{ and }a_2=a_3=\cdots=a_n=0\\
a&\mbox{if }a_1=0\mbox{ and }a_i\not=0\mbox{ for
some }2\leq i\leq n\\
(1,a_2,\ldots,a_n)&\mbox{if }a_1=q-1\mbox{ and }a_i\not=0\mbox{ for
some }2\leq i\leq n\\
(a_1+1,a_2,\ldots,a_n)&\mbox{otherwise.}\\
\end{array}\right.
$$
For $2\leq r\leq n$ we define the permutation $\pi_r:A^n\rightarrow
A^n$ as follows.
$$
\pi_r:a\mapsto\left\{\begin{array}{ll}
(a_1,\ldots,a_{r-1},a_r+1,a_{r+1},\ldots,a_n)&\mbox{if
}a_r\not=q-1\mbox{ and }a_i\not=q-1\mbox{ for some }i\not=r\\
(a_1,\ldots,a_{r-1},a_r+1,a_{r+1},\ldots,a_n)&\mbox{if
}a_r<q-2\mbox{ and }a_i=q-1\mbox{ for all }i\not=r\\
(a_1,\ldots,a_{r-1},0,a_{r+1},\ldots,a_n)&\mbox{if
}a_r=q-1\mbox{ and }a_i\not=q-1\mbox{ for some }i\not=r\\
(a_1,\ldots,a_{r-1},0,a_{r+1},\ldots,a_n)&\mbox{if
}a_r=q-2\mbox{ and }a_i=q-1\mbox{ for all }i\not=r\\
a&\mbox{if }a_1=a_2=\ldots=a_n=q-1\\
\end{array}\right.
$$

For example, if $q=4$ and $n=3$ then, writing the elements of $A^n$
as a series of grids, we have that the permutation $\pi_1$ is

\hspace{10mm} \setlength{\unitlength}{5mm}
\begin{picture}(0,2)
\put(0.0,0.5){$a=(a_1,a_2,0,0)$} \put(5.0,0.5){$a=(a_1,a_2,1,*)$}
\put(10.0,0.5){$a=(a_1,a_2,2,*)$} \put(15.0,0.5){$a=(a_1,a_2,3,*)$}

\multiput(0,0)(0,-1){5}{\line(1,0){4}}
\multiput(0,0)(1,0){5}{\line(0,-1){4}} \put(0.5,-0.5){\line(1,0){1}}
\multiput(1.5,-1.5)(0,-1){3}{\vector(1,0){2}}

\multiput(5,0)(0,-1){5}{\line(1,0){4}}
\multiput(5,0)(1,0){5}{\line(0,-1){4}}
\multiput(6.5,-0.5)(0,-1){4}{\vector(1,0){2}}

\multiput(10,0)(0,-1){5}{\line(1,0){4}}
\multiput(10,0)(1,0){5}{\line(0,-1){4}}
\multiput(11.5,-0.5)(0,-1){4}{\vector(1,0){2}}

\multiput(15,0)(0,-1){5}{\line(1,0){4}}
\multiput(15,0)(1,0){5}{\line(0,-1){4}}
\multiput(16.5,-0.5)(0,-1){4}{\vector(1,0){2}}
\end{picture}
\vspace{25mm}

\noindent where the star means that $a_4$ can take any value, whilst the permutation $\pi_2$ is

\hspace{10mm} \setlength{\unitlength}{5mm}
\begin{picture}(0,2)
\put(0.0,0.5){$a=(a_1,a_2,0,*)$} \put(5.0,0.5){$a=(a_1,a_2,1,*)$}
\put(10.0,0.5){$a=(a_1,a_2,2,*)$} \put(15.0,0.5){$a=(a_1,a_2,3,*)$}

\multiput(0,0)(0,-1){5}{\line(1,0){4}}
\multiput(0,0)(1,0){5}{\line(0,-1){4}}
\multiput(0.5,-0.5)(1,0){4}{\vector(0,-1){3}}

\multiput(5,0)(0,-1){5}{\line(1,0){4}}
\multiput(5,0)(1,0){5}{\line(0,-1){4}}
\multiput(5.5,-0.5)(1,0){4}{\vector(0,-1){3}}

\multiput(10,0)(0,-1){5}{\line(1,0){4}}
\multiput(10,0)(1,0){5}{\line(0,-1){4}}
\multiput(10.5,-0.5)(1,0){4}{\vector(0,-1){3}}

\multiput(15,0)(0,-1){5}{\line(1,0){4}}
\multiput(15,0)(1,0){5}{\line(0,-1){4}}
\multiput(15.5,-0.5)(1,0){3}{\vector(0,-1){3}}
\put(18.5,-0.5){\vector(0,-1){2}}

\end{picture}
\vspace{25mm}

\noindent and similarly for $\pi_3$ and $\pi_4$.

The argument concludes in the same way as the previous lemma: these
permutations generate a group that is 2-transitive and thus
primitive and contains a transposition, so by Jordan's Theorem our
result follows.
\end{proof}

We remark that $q=2$ must naturally be handled separately since all
cycles of every instruction in that case have length two.

More generally we ask the following.

\begin{question}
What do minimum generating sets of instructions look like?
\end{question}

As a partial answer to this question we have the following.

\begin{lemma} 
Let $X\subset\mbox{Sym}(A^n)$ be a set of $n$ instructions which generates $\Sym(A^n)$.
Then $X$ does not contain a unary instruction.
\end{lemma}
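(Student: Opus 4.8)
The plan is to show that the presence of a unary instruction in $X$ forces $\langle X\rangle$ to preserve a nontrivial partition of $A^n$, i.e. to be imprimitive, which is impossible since $\Sym(A^n)$ is primitive. The first step is to pin down the structure of $X$. Since $\langle X\rangle=\Sym(A^n)$ is transitive, every register must be updated by some instruction of $X$: if register $i$ were never updated, its value would be a $\langle X\rangle$-invariant, precluding transitivity. As $X$ consists of exactly $n$ instructions and there are $n$ registers, this forces a bijection, so each register is updated by precisely one instruction of $X$. Suppose for contradiction that some $g\in X$ is unary, updating register $j$ via $y_j\gets h(y_j)$ with $h\in\Sym(A)$; then no other instruction of $X$ touches register $j$.

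Next I would exhibit the block system. Consider the partition of $A^n$ into the $q$ level sets $B_a=\{x\in A^n: x_j=a\}$ for $a\in A$, each of cardinality $q^{n-1}$. I claim every element of $X$ preserves this partition. Indeed, each instruction of $X$ other than $g$ updates a register different from $j$ and hence fixes the $j$-th coordinate, so it maps each $B_a$ into itself; and $g$ sends a state $x\in B_a$ to one whose $j$-th coordinate is $h(a)$, so it maps $B_a$ onto $B_{h(a)}$. Thus every generator respects $\{B_a:a\in A\}$, and therefore so does $\langle X\rangle$.

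Finally I would invoke primitivity. Because $q\ge 2$ and $n\ge 2$, this partition is nontrivial: it has $q\ge 2$ blocks, each of size $q^{n-1}\ge 2$. Hence $\langle X\rangle$ is contained in the stabiliser of a nontrivial block system and is a proper subgroup of $\Sym(A^n)$, which is $2$-transitive and hence primitive on the $q^n\ge 2$ states. This contradicts $\langle X\rangle=\Sym(A^n)$, so $X$ can contain no unary instruction.

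The argument is short, and I expect the only point requiring care is the first step: arguing that transitivity forces the $n$ instructions to update $n$ distinct registers, so that the coordinate updated by the putative unary instruction is genuinely fixed by all the other generators. This is precisely the observation ``one needs to update all $n$ registers to generate a transitive group'' recorded before Theorem~\ref{th:sym_alt}. Once that is in place, recognising that the level sets of that coordinate form a block system is the key idea, and checking that every generator respects it is routine.
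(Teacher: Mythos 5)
Your proof is correct and takes essentially the same approach as the paper: the paper's invariant equivalence relation $\sim_r$ (agreement in the register updated by the unary instruction) is exactly your partition into the level sets $B_a$, and both arguments conclude that $\langle X\rangle$ preserves this nontrivial partition and so cannot be $\Sym(A^n)$. The only differences are cosmetic --- you spell out the counting step showing each register is updated by exactly one of the $n$ instructions, and phrase the conclusion via primitivity, both of which the paper leaves implicit.
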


\begin{proof}
Let $\sim_r$ be the equivalence relation on $A^n$ where $x \sim_r y$ if and only if $x_r = y_r$. Then any instruction updating any register other than $r$ preserves $\sim_r$. Moreover, any unary instruction updating the register $r$ also preserves $\sim_r$. Therefore, if $X$ contains a unary instruction, it preserves $\sim_r$ for some $r$ and hence cannot generate $\Sym(A^n)$. 
\end{proof}

{\bf Comment} Since we can order the states in $A^n$ such that two consecutive  states only differ in one register (it is called a $(q,n)$-Gray code \cite{Gua98}), the Coxeter generators according to that ordering are all instructions. Therefore, the maximum size of a minimal set of generating instructions (i.e., a generating set whose proper subsets are not generating) is exactly $q^n-1$.\\
\\
We proceed to discuss Theorem \ref{th:sym_alt} (b).
Essentially our argument is the same as the previous two lemmas
replacing each of our transpositions with 3-cycles and resorting to
Jordan's Theorem to deduce the final result. Unfortunately, in this
case we need to split off into more cases than simply even and odd
since we now need to keep track not only of the length of the cycles
mod 3 but also of the parities of the permutations to ensure that
the permutations we construct are in fact all even.

\begin{lemma}\label{AltLemma}
If $q\equiv$ 1 or 5 ($mod$ 6) and $q>1$ then Alt($A^n$) is generated
by $n$ instructions.
\end{lemma}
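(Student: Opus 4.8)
The plan is to mimic the constructions in the preceding lemmas, but to arrange every generator to be an even permutation and to replace the distinguished transposition by a $3$-cycle, so that Jordan's Theorem delivers $\Alt(A^n)$ rather than $\Sym(A^n)$. The hypothesis $q\equiv 1,5\pmod 6$, i.e.\ $\gcd(q,6)=1$, is exactly what makes this clean: since $q$ is odd every $q$-cycle is an even permutation, and since $3\nmid q$ the Chinese Remainder Theorem will let me isolate a $3$-cycle as a power of an instruction whose other cycles all have length $q$. Throughout I may assume $q\ge 5$ and $n\ge 2$, so that $q^n\ge 25$, treating the smallest $n$ by computer.

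First I would keep the generators $\pi_2,\dots,\pi_n$ exactly as in the odd-$q$ symmetric lemma: there each $\pi_r$ is a single-register instruction decomposing into $q^{n-1}$ disjoint $q$-cycles, and as $q$ is odd these are all even, so no modification is needed. The only generator I would alter is $\pi_1$, which I redefine as the instruction updating register $1$ that acts as the $3$-cycle $\bigl((0,\dots,0),(1,0,\dots,0),(2,0,\dots,0)\bigr)$ on the fibre $a_2=\dots=a_n=0$, fixes the remaining $q-3\ge 2$ states of that fibre, and acts as the increment $a_1\mapsto a_1+1\pmod q$ on every other fibre. Each such fibre of the latter type contributes one $q$-cycle, and since $q$ is odd both these $q$-cycles and the $3$-cycle are even, so $\pi_1\in\Alt(A^n)$.

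Next I would extract a bare $3$-cycle. The permutation $\pi_1$ has order $\mathrm{lcm}(3,q)=3q$, and since $\gcd(q,3)=1$ there is an integer $k$ with $k\equiv 1\pmod 3$ and $k\equiv 0\pmod q$; then $\tau:=\pi_1^k$ trivialises every $q$-cycle while leaving the $3$-cycle intact, so $\tau$ is a $3$-cycle lying in $G:=\langle\pi_1,\dots,\pi_n\rangle$. Transitivity of $G$ is immediate from the incrementing instructions, and the heart of the argument is to establish $2$-transitivity by conjugating $\tau$ by the various $\pi_i$ until the conjugates act transitively on the complement of a fixed point, precisely as in the previous two lemmas. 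Once $G$ is $2$-transitive it is primitive, and as $\tau$ is a cycle of prime length $3\le q^n-3$, Jordan's Theorem forces $G=\Sym(A^n)$ or $G=\Alt(A^n)$. Since every generator is even we have $G\le\Alt(A^n)$, whence $G=\Alt(A^n)$.

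I expect the main obstacle to be the same combinatorial bookkeeping that complicates the symmetric-group lemmas, now constrained by the parity requirement: I must verify that conjugating the $3$-cycle $\tau$ by the $\pi_r$ yields enough $3$-cycles for the stabiliser of $(q-1,\dots,q-1)$ to act transitively on the remaining $q^n-1$ states. The evenness constraint is what removes slack here, since it denies the freedom, available in the symmetric case, to insert an odd cycle to repair transitivity; this is why the coprimality-to-$6$ hypothesis is essential, and why the wrap-around cases in the definitions of $\pi_1$ and the $\pi_r$ must be chosen so that the whole construction stays inside $\Alt(A^n)$ while still being primitive.
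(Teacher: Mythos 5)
Your proposal is correct and follows essentially the same route as the paper: identical generators (the $\pi_r$, $r\ge 2$, carried over unchanged from the odd-$q$ symmetric case, and $\pi_1$ modified to act as a $3$-cycle on the zero fibre and as an increment elsewhere), extraction of a $3$-cycle as a power of $\pi_1$ (the paper simply takes $\tau=\pi_1^{q}$ rather than invoking CRT), $2$-transitivity via conjugates of $\tau$, and Jordan's Theorem combined with evenness of the generators to conclude $G=\Alt(A^n)$.
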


\begin{proof}
We define the permutation $\pi_1:A^n\rightarrow A^n$ as follows. For
$a:=(a_1,a_2,\ldots,a_n)$ we have that
$$
\pi_1:a\mapsto\left\{\begin{array}{ll}
(1,0,\ldots,0)&\mbox{if }a_1=a_2=\cdots=a_n=0\\
(2,0,\ldots,0)&\mbox{if }a_1=1\mbox{ and }a_2=\cdots=a_n=0\\
(0,0,\ldots,0)&\mbox{if }a_1=2\mbox{ and }a_2=\cdots=a_n=0\\
a&\mbox{if }a_1>2\mbox{ and }a_2=\cdots=a_n=0\\
(0,a_2,\ldots,a_n)&\mbox{if }a_1=q-1\mbox{ and }a_i\not=0\mbox{ for
some }2\leq i\leq n\\
(a_1+1,a_2,\ldots,a_n)&\mbox{otherwise.}\\
\end{array}\right.
$$
For $2\leq r\leq n$ we define the permutation $\pi_r:A^n\rightarrow
A^n$ as follows.
$$
\pi_r:a\mapsto\left\{\begin{array}{ll}
(a_1,\ldots,a_{r-1},0,a_{r+1},\ldots,a_n)&\mbox{if }a_r=q-1\mbox{
and }a_i\not=q-1\mbox{ for some }i\not=r\\
(a_1,\ldots,a_{r-1},a_r+1,a_{r+1},\ldots,a_n)&\mbox{if
}a_r\not=q-1\mbox{ and }a_i\not=q-1\mbox{ for some }i\not=r\\
(a_1,\ldots,a_{r-1},q-1,a_{r+1},\ldots,a_n)&\mbox{if }a_r=0\mbox{
and }a_i=q-1\mbox{ for all }i\not=r\\
(a_1,\ldots,a_{r-1},a_r-1,a_{r+1},\ldots,a_n)&\mbox{otherwise.}\\
\end{array}\right.
$$

For example, if $q=5$ and $n=3$ then, writing the elements of $A^n$
as a series of grids, we have that the permutation $\pi_1$ is
\newpage

\setlength{\unitlength}{5mm}
\begin{picture}(0,2)
\put(0.5,0.5){$a=(a_1,a_2,0)$} \put(6.5,0.5){$a=(a_1,a_2,1)$}
\put(12.5,0.5){$a=(a_1,a_2,2)$} \put(18.5,0.5){$a=(a_1,a_2,3)$}
\put(9.5,-6){$a=(a_1,a_2,4)$}

\multiput(0,0)(0,-1){6}{\line(1,0){5}}
\multiput(0,0)(1,0){6}{\line(0,-1){5}}
\put(0.5,-0.5){\vector(1,0){2}}
\multiput(0.5,-1.5)(0,-1){4}{\vector(1,0){4}}

\multiput(6,0)(0,-1){6}{\line(1,0){5}}
\multiput(6,0)(1,0){6}{\line(0,-1){5}}
\multiput(6.5,-0.5)(0,-1){5}{\vector(1,0){4}}

\multiput(12,0)(0,-1){6}{\line(1,0){5}}
\multiput(12,0)(1,0){6}{\line(0,-1){5}}
\multiput(12.5,-0.5)(0,-1){5}{\vector(1,0){4}}

\multiput(18,0)(0,-1){6}{\line(1,0){5}}
\multiput(18,0)(1,0){6}{\line(0,-1){5}}
\multiput(18.5,-0.5)(0,-1){5}{\vector(1,0){4}}

\multiput(9,-6.5)(0,-1){6}{\line(1,0){5}}
\multiput(9,-6.5)(1,0){6}{\line(0,-1){5}}
\multiput(9.5,-7.0)(0,-1){5}{\vector(1,0){4}}
\end{picture}
\vspace{58mm}

\noindent whilst the permutation $\pi_2$ is \vspace{-2mm}

\setlength{\unitlength}{5mm}
\begin{picture}(0,2)
\put(0.5,0.5){$a=(a_1,a_2,0)$} \put(6.5,0.5){$a=(a_1,a_2,1)$}
\put(12.5,0.5){$a=(a_1,a_2,2)$} \put(18.5,0.5){$a=(a_1,a_2,3)$}
\put(9.5,-6){$a=(a_1,a_2,4)$}

\multiput(0,0)(0,-1){6}{\line(1,0){5}}
\multiput(0,0)(1,0){6}{\line(0,-1){5}}
\multiput(0.5,-0.5)(1,0){5}{\vector(0,-1){4}}

\multiput(6,0)(0,-1){6}{\line(1,0){5}}
\multiput(6,0)(1,0){6}{\line(0,-1){5}}
\multiput(6.5,-0.5)(1,0){5}{\vector(0,-1){4}}

\multiput(12,0)(0,-1){6}{\line(1,0){5}}
\multiput(12,0)(1,0){6}{\line(0,-1){5}}
\multiput(12.5,-0.5)(1,0){5}{\vector(0,-1){4}}

\multiput(18,0)(0,-1){6}{\line(1,0){5}}
\multiput(18,0)(1,0){6}{\line(0,-1){5}}
\multiput(18.5,-0.5)(1,0){5}{\vector(0,-1){4}}

\multiput(9,-6.5)(0,-1){6}{\line(1,0){5}}
\multiput(9,-6.5)(1,0){6}{\line(0,-1){5}}
\multiput(9.5,-7)(1,0){4}{\vector(0,-1){4}}\put(13.5,-11){\vector(0,1){4}}
\end{picture}
\vspace{60mm}

\noindent and similarly for $\pi_3$.

Since all of these permutations are products of cycles of odd length
they are all even permutations.

We claim that $G:=\langle \pi_1,\ldots,\pi_n\rangle$ is
2-transitive. It is easy to see that $G$ acts transitively on $A^n$.
We will show that the stabilizer of the state $(q-1,q-1,\ldots,q-1)$ is
transitive on the remaining states.

Note that all but one cycle of $\pi_1$ has length $q$, which is
coprime to 3, the remaining cycle being a 3-cycle. It follows that
$\tau:=\pi_1^q$ is a 3-cycle. It is easy to see that repeatedly
conjugating $\tau$ by the various $\pi_i$s we have enough elements
for the stabilizer of $(q-1,q-1,\ldots,q-1)$ to act transitively on the
remaining points, that is, $G$ acts 2-transitively.

Since any 2-transitive action is primitive it follows that $G$ acts
primitively and since $\tau\in G$ is a 3-cycle, Jordan's Theorem
tells us that $G=\Alt(A^n)$.
\end{proof}

\begin{lemma}
If $q\equiv$ 0 or 2 ($mod$ 6) and $q>2$ then Alt($A^n$) is generated
by $n$ instructions.
\end{lemma}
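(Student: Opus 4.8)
The plan is to follow the same template as Lemma \ref{AltLemma} and the two symmetric-group lemmas for even and odd $q$: define instructions $\pi_1,\ldots,\pi_n$ with $\pi_r$ updating register $r$, verify that they are all \emph{even}, show that $G=\langle\pi_1,\ldots,\pi_n\rangle$ is $2$-transitive (hence primitive), read off a $3$-cycle $\tau\in G$ from a power of $\pi_1$, and finish with Jordan's Theorem: primitivity together with the $3$-cycle forces $G=\Sym(A^n)$ or $\Alt(A^n)$, and evenness of the generators rules out the former. The only genuinely new difficulty compared with the case $q$ coprime to $6$ is arranging evenness while still being able to extract a $3$-cycle from a power of $\pi_1$, and this is exactly where the hypothesis $q\equiv 0,2\pmod 6$ is used.

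For the registers $r\ge 2$ I would reuse the instructions of Lemma \ref{AltLemma} verbatim: $\pi_r$ cycles register $r$ forward on every line on which some other coordinate differs from $q-1$, and backward on the single line where all other coordinates equal $q-1$. Thus $\pi_r$ is a product of exactly $q^{n-1}$ disjoint $q$-cycles; since $q$ is even a $q$-cycle is an odd permutation, but the even number $q^{n-1}$ of them makes $\pi_r$ even. For $\pi_1$ I would modify the even-$q$ symmetric construction: on the bulk states (those with some $a_i\ne 0$, $i\ge 2$) fix $a_1=0$ and cycle $1\to 2\to\cdots\to q-1\to 1$, while on the first line $(a_1,0,\ldots,0)$ I replace the transposition used for $\Sym$ by the $3$-cycle $(e^0,e^1,(2,0,\ldots,0))$ and fix the states with $a_1\ge 3$. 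Each bulk line now carries a $(q-1)$-cycle, and since $q$ is even, $q-1$ is odd, so these are even permutations; together with the even first-line $3$-cycle this makes $\pi_1$ even, so $G\le\Alt(A^n)$.

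The crux is the extraction of the $3$-cycle. I would set $\tau:=\pi_1^{\,q-1}$. Every $(q-1)$-cycle in the bulk dies, so $\tau$ is just the $(q-1)$st power of the first-line $3$-cycle. Because $q\equiv 0$ or $2\pmod 6$ we have $q-1\equiv 2$ or $1\pmod 3$, hence $3\nmid(q-1)$, and $\tau$ is again a $3$-cycle (the cycle itself or its inverse). This is precisely why the residues $q\equiv 0,2\pmod 6$ are handled together, and why the remaining even residue $q\equiv 4\pmod 6$ — for which $3\mid q-1$ would kill the cycle — is excluded from this lemma and must be treated separately.

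It then remains to establish $2$-transitivity, which I would argue exactly as in the earlier lemmas: $G$ is visibly transitive, and conjugating $\tau$ by the various $\pi_r$ supplies enough $3$-cycles to make the stabiliser of $(q-1,\ldots,q-1)$ transitive on the remaining $q^n-1$ states. Primitivity is then automatic, and Jordan's Theorem applied with the prime $p=3\le q^n-3$ gives $G=\Alt(A^n)$. I expect the parity bookkeeping of the previous paragraph to be the conceptual heart of the argument, while the transitivity verification — routine and directly analogous to Lemma \ref{AltLemma} — is the most computation-heavy step and the one most likely to conceal a case needing care, in particular the interaction between the new $a_1=0$ fixed points of $\pi_1$ and the wrap-around lines of the $\pi_r$.
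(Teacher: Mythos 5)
Your proposal is correct and follows essentially the same route as the paper: the paper's own $\pi_1$ is exactly your construction (the $3$-cycle $(e^0,e^1,(2,0,\ldots,0))$ on the first line together with $(q-1)$-cycles on $\{1,\ldots,q-1\}$ in the bulk), the $3$-cycle is extracted as a power of $\pi_1$ using $3\nmid q-1$, and the conclusion is the same $2$-transitivity/Jordan argument with evenness of the generators ruling out $\Sym(A^n)$. The only immaterial difference is in the $\pi_r$ for $r\geq 2$: the paper replaces the backward $q$-cycle of Lemma \ref{AltLemma} on the all-$(q-1)$ line by a $(q-2)$-cycle fixing $a_r\in\{q-2,q-1\}$, whereas you keep the Lemma \ref{AltLemma} instructions verbatim; both choices are even (your count of $q^{n-1}$ even-length cycles is correct since $q$ is even and $n\geq 2$) and both support the same concluding argument.
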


\begin{proof}
We define the permutation $\pi_1:A^n\rightarrow A^n$ as follows. For
$a:=(a_1,a_2,\ldots,a_n)$ we have that
$$
\pi_1:a\mapsto\left\{\begin{array}{ll}
(1,0,\ldots,0)&\mbox{if }a_1=a_2=\cdots=a_n=0\\
(2,0,\ldots,0)&\mbox{if }a_1=1\mbox{ and }a_2=\cdots=a_n=0\\
(0,0,\ldots,0)&\mbox{if }a_1=2\mbox{ and }a_2=\cdots=a_n=0\\
a&\mbox{if }a_1>2\mbox{ and }a_2=\cdots=a_n=0\\
a&\mbox{if }a_1=0\mbox{ and }a_i\not=0\mbox{ for
some }2\leq i\leq n\\
(1,a_2,\ldots,a_n)&\mbox{if }a_1=q-1\mbox{ and }a_i\not=0\mbox{ for
some }2\leq i\leq n\\
(a_1+1,a_2,\ldots,a_n)&\mbox{otherwise.}\\
\end{array}\right.
$$
For $2\leq r\leq n$ we define the permutation $\pi_r:A^n\rightarrow
A^n$ as follows.
$$
\pi_r:a\mapsto\left\{\begin{array}{ll}
(a_1,\ldots,a_{r-1},0,a_{r+1},\ldots,a_n)&\mbox{if }a_r=q-1\mbox{
and }a_i\not=q-1\mbox{ for some }i\not=r\\
(a_1,\ldots,a_{r-1},a_r+1,a_{r+1},\ldots,a_n)&\mbox{if
}a_r\not=q-1\mbox{ and }a_i\not=q-1\mbox{ for some }i\not=r\\
(a_1,\ldots,a_{r-1},a_r+1,a_{r+1},\ldots,a_n)&\mbox{if
}a_r<q-3\mbox{
and }a_i=q-1\mbox{ for all }i\not=r\\
(a_1,\ldots,a_{r-1},0,a_{r+1},\ldots,a_n)&\mbox{if }a_r=q-3\mbox{
and }a_i=q-1\mbox{ for all }i\not=r\\
a&\mbox{otherwise.}\\
\end{array}\right.
$$
By this stage we believe that the reader has seen sufficiently many
examples and their corresponding diagrams for the reader to be able
draw these themselves.

Since $\pi_1$ is a product of cycles of odd length it is an even
permutation, whilst the permutations $\pi_r$ for $2\leq r\leq n$ are
products of an even number of even cycles and are therefore also
even.

The argument now concludes in the same way as Lemma \ref{AltLemma}.
\end{proof}

\begin{lemma}
If $q\equiv$ 3 ($mod$ 6) then Alt($A^n$) is generated by $n$
instructions.
\end{lemma}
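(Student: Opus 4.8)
The plan is to follow the template established in Lemma \ref{AltLemma} and in the construction used for $q\equiv 0,2\pmod 6$: exhibit $n$ instructions $\pi_1,\ldots,\pi_n$, with $\pi_r$ updating register $r$, show that they are all even permutations, that $G=\langle\pi_1,\ldots,\pi_n\rangle$ is $2$-transitive (hence primitive) and contains a $3$-cycle, and then invoke Jordan's Theorem to force $G=\Alt(A^n)$. Throughout, the hypothesis $q\equiv 3\pmod 6$ means precisely that $q$ is odd and divisible by $3$.

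The one genuinely new difficulty is the extraction of a single $3$-cycle. In Lemma \ref{AltLemma} the element $\tau=\pi_1^{q}$ is a $3$-cycle because $\pi_1$ consists of one $3$-cycle on the ``first line'' $\{(a_1,0,\ldots,0)\}$ together with cycles of length $q$, and $\gcd(q,3)=1$. Here $3\mid q$, so that same $\pi_1$ would give $\pi_1^{q}=\mathrm{id}$ and annihilate the $3$-cycle. I would therefore borrow the permutation $\pi_1$ used for $q\equiv 0,2\pmod 6$: keep the $3$-cycle $(0,\ldots,0)\mapsto(1,0,\ldots,0)\mapsto(2,0,\ldots,0)\mapsto(0,\ldots,0)$ and fix $(a_1,0,\ldots,0)$ for $a_1>2$, but on each column with nonzero tail fix the state with $a_1=0$ and cycle $a_1$ over $\{1,\ldots,q-1\}$, so that the bulk cycles now have length $q-1$. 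Since $q\equiv 0\pmod 3$ we have $q-1\equiv 2\pmod 3$, so $\tau:=\pi_1^{\,q-1}$ kills every bulk cycle while restricting on the special orbit to the inverse of the $3$-cycle; hence $\tau$ is a single $3$-cycle. For $r\ge 2$ I would keep the permutations $\pi_r$ exactly as in Lemma \ref{AltLemma}, all of whose cycles have the odd length $q$.

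It then remains to check the routine points. For evenness, each $\pi_r$ with $r\ge2$ is a product of $q$-cycles with $q$ odd, hence even; and $\pi_1$ is the product of one $3$-cycle with $q^{n-1}-1$ cycles of the even length $q-1$, so, $q$ being odd, the number $q^{n-1}-1$ is even and $\pi_1$ is even too. Transitivity of $G$ is checked as before (from $(0,\ldots,0)$ one reaches $(1,0,\ldots,0)$ by $\pi_1$, alters the tail with the $\pi_r$, and then cycles $a_1$ over $\{1,\ldots,q-1\}$ by $\pi_1$). For $2$-transitivity I would, exactly as in Lemma \ref{AltLemma}, repeatedly conjugate $\tau$ by the $\pi_i$ to show that the stabiliser of $(q-1,\ldots,q-1)$ is transitive on the remaining states; primitivity is then automatic, and since $q^n\ge 6$ the prime $3$ satisfies $3\le q^n-3$, so Jordan's Theorem gives $G=\Sym(A^n)$ or $\Alt(A^n)$, and evenness of the generators forces $G=\Alt(A^n)$. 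The main obstacle is precisely the redesign of $\pi_1$ around the constraint $3\mid q$ together with the parity bookkeeping that keeps it even; the smallest case $q=3$ (where the bulk cycles degenerate to transpositions, so $\tau=\pi_1^{2}$) is both covered by this construction and already settled in Lemma \ref{thing}.
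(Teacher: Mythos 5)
Your proposal is correct and follows essentially the same route as the paper: the paper likewise keeps the $\pi_r$ of Lemma \ref{AltLemma} for $2\le r\le n$, builds $\pi_1$ as a $3$-cycle on the all-zero line together with bulk cycles of length coprime to $3$ on the columns with nonzero tail, and concludes via evenness, $2$-transitivity and Jordan's Theorem. The only difference is in the design of $\pi_1$: the paper fixes $a_1\in\{0,1\}$ on each nonzero-tail column, so its bulk cycles have the \emph{odd} length $q-2\equiv 1\pmod 3$ and evenness of $\pi_1$ is immediate, whereas your bulk cycles of even length $q-1$ require the (correct) additional count that there are $q^{n-1}-1$ of them, an even number since $q$ is odd.
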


\begin{proof}
We define the permutation $\pi_1:A^n\rightarrow A^n$ as follows. For
$a:=(a_1,a_2,\ldots,a_n)$ we have that
$$
\pi_1:a\mapsto\left\{\begin{array}{ll}
(1,0,\ldots,0)&\mbox{if }a_1=a_2=\cdots=a_n=0\\
(2,0,\ldots,0)&\mbox{if }a_1=1\mbox{ and }a_2=\cdots=a_n=0\\
(0,0,\ldots,0)&\mbox{if }a_1=2\mbox{ and }a_2=\cdots=a_n=0\\
a&\mbox{if }a_1>2\mbox{ and }a_2=\cdots=a_n=0\\
a&\mbox{if }a_1\in\{0,1\}\mbox{ and }a_i\not=0\mbox{ for
some }2\leq i\leq n\\
(2,a_2,\ldots,a_n)&\mbox{if }a_1=q-1\mbox{ and }a_i\not=0\mbox{ for
some }2\leq i\leq n\\
(a_1+1,a_2,\ldots,a_n)&\mbox{otherwise.}\\
\end{array}\right.
$$
For $2\leq r\leq n$ we define the permutation $\pi_r:A^n\rightarrow
A^n$ as follows.
$$
\pi_r:a\mapsto\left\{\begin{array}{ll}
(a_1,\ldots,a_{r-1},0,a_{r+1},\ldots,a_n)&\mbox{if }a_r=q-1\mbox{
and }a_i\not=q-1\mbox{ for some }i\not=r\\
(a_1,\ldots,a_{r-1},a_r+1,a_{r+1},\ldots,a_n)&\mbox{if
}a_r\not=q-1\mbox{ and }a_i\not=q-1\mbox{ for some }i\not=r\\
(a_1,\ldots,a_{r-1},q-1,a_{r+1},\ldots,a_n)&\mbox{if }a_r=0\mbox{
and }a_i=q-1\mbox{ for all }i\not=r\\
(a_1,\ldots,a_{r-1},a_r-1,a_{r+1},\ldots,a_n)&\mbox{otherwise.}\\
\end{array}\right.
$$
Since all of these are products of cycles of odd length they are all
even permutations.

The argument now concludes in the same way as Lemma \ref{AltLemma}.
\end{proof}

\begin{lemma}
If $q=4$ then Alt($A^n$) is generated by $n$ instructions.
\end{lemma}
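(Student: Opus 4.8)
The plan is to follow the template established in the preceding lemmas: exhibit $n$ \emph{even} instructions $\pi_1,\ldots,\pi_n$, with $\pi_r$ updating register $r$, show that $G:=\langle\pi_1,\ldots,\pi_n\rangle$ is $2$-transitive (hence primitive) and contains a $3$-cycle, and then invoke Jordan's Theorem. Since every $\pi_r$ is even we have $G\leq\Alt(A^n)$, so Jordan's Theorem (applied with the prime $p=3$, legitimate because $3\leq 4^n-3$ for $n\geq2$) forces $G=\Alt(A^n)$ rather than $\Sym(A^n)$.

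The genuinely new difficulty, and the reason $q=4$ must be isolated, is a clash between parity and the order of the distinguished $3$-cycle. For $2\leq r\leq n$ there is no problem: I take $\pi_2,\ldots,\pi_n$ exactly as in the lemma for $q\equiv0,2\pmod 6$. For $q=4$ each such $\pi_r$ acts as a full $q$-cycle on every fibre except the one in which all other coordinates equal $q-1=3$, where it is the transposition $(0,1)$ fixing $2$ and $3$; thus $\pi_r$ is a product of $4^{n-1}$ cycles of even length, an \emph{even} number of them, and so is even, and it fixes the top state $(3,\ldots,3)$. The obstruction is entirely concentrated in $\pi_1$. The construction of Lemma \ref{AltLemma} would put a $q$-cycle on each non-constant fibre; for $q=4$ these are $4$-cycles, which are odd, and there are $4^{n-1}-1$ of them, an odd number, so $\pi_1$ would be odd. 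The construction for $q\equiv0,2\pmod 6$ would instead put $(q-1)$-cycles on those fibres; for $q=4$ these are $3$-cycles, so \emph{every} nontrivial cycle of $\pi_1$ would have order $3$, and then no power of $\pi_1$ can isolate a single $3$-cycle, so Jordan's Theorem could not be applied.

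To resolve this I design $\pi_1$ so that it is simultaneously even and has a power equal to a single $3$-cycle. As before $\pi_1$ carries the $3$-cycle $(0,1,2)$ on the all-zero fibre, fixing $(3,0,\ldots,0)$. On the remaining fibres I use cycles whose order is coprime to $3$ and whose total number is even: concretely, a $4$-cycle on all but one of them and a double transposition such as $(0,1)(2,3)$ on the last, so that the number of $4$-cycles is $4^{n-1}-2$, which is even. Then $\pi_1$ is a product of one $3$-cycle, an even number of $4$-cycles and one double transposition, hence even; and since $4$-cycles and double transpositions have order dividing $4$ while $3\nmid 4$, the power $\tau:=\pi_1^{4}$ fixes every fibre pointwise except the all-zero one, on which it is a nontrivial power of $(0,1,2)$, so $\tau$ is a single $3$-cycle. (Equivalently one may put double transpositions on all remaining fibres and take $\tau:=\pi_1^{2}$.)

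It then remains to rerun the $2$-transitivity argument of Lemma \ref{AltLemma}. Transitivity of $G$ is clear once one checks that the fibrewise $4$-cycles of $\pi_2,\ldots,\pi_n$ move between all fibres and that $\tau$ together with a neighbouring fibre of $\pi_1$ reaches the value $a_1=3$ missing from the all-zero fibre, for instance via $(2,0,\ldots,0)\mapsto(2,1,0,\ldots,0)\mapsto(3,1,0,\ldots,0)\mapsto(3,0,\ldots,0)$ using $\pi_2,\pi_1,\pi_2^{-1}$; the base of the induction can be confirmed directly. Conjugating $\tau$ by the various $\pi_i$ then produces enough $3$-cycles to show that the stabiliser of $(3,\ldots,3)$ is transitive on the remaining states, giving $2$-transitivity and hence primitivity, and Jordan's Theorem finishes the proof. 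The step I expect to require the most care is the design of $\pi_1$: reconciling evenness with the existence of an isolable $3$-cycle power is precisely what fails for the off-the-shelf constructions, and one must still verify that the modified, slightly irregular $\pi_1$ does not destroy the $2$-transitivity.
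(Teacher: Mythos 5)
Your proposal is correct and follows essentially the same route as the paper: the paper's $q=4$ proof likewise keeps the $3$-cycle $(0,1,2)$ on the all-zero fibre, places an involution on every other fibre, takes the same $\pi_2,\ldots,\pi_n$, and concludes via $2$-transitivity and Jordan's Theorem, so your parenthetical alternative (double transpositions on all remaining fibres, $\tau=\pi_1^2$) is exactly the paper's construction, and your primary variant ($4^{n-1}-2$ four-cycles plus one double transposition, $\tau=\pi_1^4$) is a minor, equally valid variation. One point in your favour: the paper's displayed rule $a\mapsto(4-a_1,a_2,\ldots,a_n)$, taken modulo $4$, would put the single transposition $(1,3)$ on each of the $4^{n-1}-1$ remaining fibres and so make $\pi_1$ odd; only under the reading $a_1\mapsto 3-a_1$ (the double transposition $(0,3)(1,2)$ on each such fibre) does the paper's own parity claim --- ``a three cycle and an even number of disjoint transpositions'' --- and hence its proof hold, which is precisely the parity trap your construction is explicitly engineered to avoid.
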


\begin{proof}
We define the permutation $\pi_1:A^n\rightarrow A^n$ as follows. For
$a:=(a_1,a_2,\ldots,a_n)$ we have that
$$
\pi_1:a\mapsto\left\{\begin{array}{ll}
(1,0,\ldots,0)&\mbox{if }a_1=0\mbox{ and }a_2=a_3=\cdots=a_n=0\\
(2,0,\ldots,0)&\mbox{if }a_1=1\mbox{ and }a_2=a_3=\cdots=a_n=0\\
(0,0,\ldots,0)&\mbox{if }a_1=2\mbox{ and }a_2=a_3=\cdots=a_n=0\\
a&\mbox{if }a_1=3\mbox{ and }a_2=a_3=\cdots=a_n=0\\
(4-a_1,a_2,\ldots,a_n)&\mbox{otherwise.}\\
\end{array}\right.
$$
For $2\leq r\leq n$ we define the permutation $\pi_r:A^n\rightarrow
A^n$ as follows.
$$
\pi_r:a\mapsto\left\{\begin{array}{ll}
(a_1,\ldots,a_{r-1},0,a_{r+1},\ldots,a_n)&\mbox{if }a_r=3\mbox{
and }a_i\not=3\mbox{ for some }i\not=r\\
(a_1,\ldots,a_{r-1},a_r+1,a_{r+1},\ldots,a_n)&\mbox{if
}a_r\not=3\mbox{ and }a_i\not=3\mbox{ for some }i\not=r\\
(a_1,\ldots,a_{r-1},1-a_r,a_{r+1},\ldots,a_n)&\mbox{if
}a_r\in\{0,1\}\mbox{
and }a_i=3\mbox{ for all }i\not=r\\
a&\mbox{otherwise.}\\
\end{array}\right.
$$

Since $\pi_1$ is a product of a three cycle and an even number of
disjoint transpositions it is an even permutation. For $2\leq r\leq
n$ we have that $\pi_r$ is a product of an odd number of cycles of
length 4 and a transposition, so they are also even permutations.

The argument now concludes in the same way as Lemma \ref{AltLemma}.
\end{proof}

\begin{lemma}
If $q\equiv$ 4 ($mod$ 6) and $q>4$ then Alt($A^n$) is generated by
$n$ instructions.
\end{lemma}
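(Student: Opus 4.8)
I would follow the template used in all the preceding lemmas: define instructions $\pi_1,\ldots,\pi_n$ (with $\pi_1$ updating register $1$ and $\pi_r$ updating register $r$), check that each is an even permutation, show that $G:=\langle\pi_1,\ldots,\pi_n\rangle$ is $2$-transitive (hence primitive), exhibit a $3$-cycle as a suitable power of $\pi_1$, and invoke Jordan's Theorem to conclude $G=\Alt(A^n)$. What makes $q\equiv 4\pmod 6$ the delicate residue class is that \emph{both} obstructions present in the earlier cases occur simultaneously: $q$ is even (so a single full $q$-cycle is an odd permutation and parities must be balanced), and $q\equiv 1\pmod 3$ (so $q-1\equiv 0\pmod 3$, which means the device $\tau=\pi_1^{\,q-1}$ used for $q\equiv 0,2\pmod 6$ would kill the planted $3$-cycle rather than isolate it). The plan is therefore to revert to full $q$-cycles off the first axis, as in Lemma~\ref{AltLemma}, and to repair parity with a single extra transposition planted on the axis.

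Concretely, I would take
\[
\pi_1(a)=
\begin{cases}
(a_1+1 \bmod 3,0,\ldots,0) & a_1\in\{0,1,2\},\ a_2=\cdots=a_n=0,\\
(7-a_1,0,\ldots,0) & a_1\in\{3,4\},\ a_2=\cdots=a_n=0,\\
a & a_1>4,\ a_2=\cdots=a_n=0,\\
(0,a_2,\ldots,a_n) & a_1=q-1,\ a_i\neq 0\text{ for some }i\ge 2,\\
(a_1+1,a_2,\ldots,a_n) & \text{otherwise,}
\end{cases}
\]
so that on the axis $\pi_1$ is the $3$-cycle $(0,1,2)$ times the transposition $(3,4)$, fixing the values $5,\ldots,q-1$, while on each of the $q^{n-1}-1$ off-axis suffixes it is a full $q$-cycle on $a_1$. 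The parity check is the crux: each $q$-cycle is odd and there is an odd number ($q^{n-1}-1$) of them, so the off-axis part is odd; the axis part (a $3$-cycle times a transposition) is also odd; hence $\pi_1$ is even. I would then set $\tau:=\pi_1^{\,q}$ and verify it is a single $3$-cycle, noting that since $q$ is even the $q$-cycles and the transposition $(3,4)$ are annihilated, while since $q\equiv 1\pmod 3$ the $3$-cycle survives as $(0,1,2)^q=(0,1,2)$. For $2\le r\le n$ I would take $\pi_r$ exactly as in the preceding lemma treating $q\equiv 0,2\pmod 6$ (full $q$-cycles on all but one suffix, and one $(q-2)$-cycle on the remaining one): this is a product of $q^{n-1}$ even-length cycles, an even number of them, hence an even permutation.

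With the $\pi_i$ in hand, the argument concludes exactly as in Lemma~\ref{AltLemma}: $G$ is visibly transitive, and repeatedly conjugating the $3$-cycle $\tau$ by the various $\pi_i$ makes the stabiliser of $(q-1,\ldots,q-1)$ transitive on the remaining states, so $G$ is $2$-transitive and therefore primitive; containing the $3$-cycle $\tau$, Jordan's Theorem forces $G=\Sym(A^n)$ or $\Alt(A^n)$, and since every $\pi_i$ is even we get $G=\Alt(A^n)$. I expect the only real obstacle to be the simultaneous bookkeeping that a \emph{single} exponent (here $q$) must annihilate all off-axis $q$-cycles \emph{and} the parity-fixing transposition while preserving the $3$-cycle; this is exactly what pins down both the need for the transposition and the restriction $q>4$, since when $q=4$ the $3$-cycle already occupies $\{0,1,2\}$ and there is no room left on the axis to insert a transposition (which is why $q=4$ must be, and was, handled separately).
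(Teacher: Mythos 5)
Your proposal is correct and takes essentially the same approach as the paper: the paper's $\pi_1$ likewise plants a $3$-cycle together with a parity-fixing transposition on the axis (there as $(0,1)(3,4,5)$ rather than your $(0,1,2)(3,4)$), uses full $q$-cycles on the $q^{n-1}-1$ off-axis lines, extracts the $3$-cycle $\tau=\pi_1^{q}$ (using $q$ even and $q\equiv 1\pmod 3$), and concludes via $2$-transitivity, Jordan's Theorem and the evenness of the generators. The remaining differences are cosmetic: your $\pi_r$, borrowed verbatim from the $q\equiv 0,2\pmod 6$ lemma, realises the same intended structure (an even number of even-length cycles) as the paper's $\pi_r$ for this case.
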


\begin{proof}
We define the permutation $\pi_1:A^n\rightarrow A^n$ as follows. For
$a:=(a_1,a_2,\ldots,a_n)$ we have that
$$
\pi_1:a\mapsto\left\{\begin{array}{ll}
(1-a_1,a_2,\ldots,a_n)&\mbox{if }a_1\in\{0,1\}\mbox{ and
}a_2=a_3=\cdots=a_n=0\\
(4,0,\ldots,0)&\mbox{if }a_1=3\mbox{ and }a_2=a_3=\cdots=a_n=0\\
(5,0,\ldots,0)&\mbox{if }a_1=4\mbox{ and }a_2=a_3=\cdots=a_n=0\\
(3,0,\ldots,0)&\mbox{if }a_1=5\mbox{ and }a_2=a_3=\cdots=a_n=0\\
a&\mbox{if }a_1>5\mbox{ and }a_2=\cdots=a_n=0\\
(0,a_2,\ldots,a_n)&\mbox{if }a_1=q-1\mbox{ and }a_i\not=0\mbox{ for
some }2\leq i\leq n\\
(a_1+1,a_2,\ldots,a_n)&\mbox{otherwise.}\\
\end{array}\right.
$$
For $2\leq r\leq n$ we define the permutation $\pi_r:A^n\rightarrow
A^n$ as follows.
$$
\pi_r:a\mapsto\left\{\begin{array}{ll}
(a_1,\ldots,a_{r-1},0,a_{r+1},\ldots,a_n)&\mbox{if }a_r=q-1\mbox{
and }a_i\not=q-1\mbox{ for some }i\not=r\\
(a_1,\ldots,a_{r-1},a_r+1,a_{r+1},\ldots,a_n)&\mbox{if
}a_r\not=q-1\mbox{ and }a_i\not=q-1\mbox{ for some }i\not=r\\
(a_1,\ldots,a_{r-1},q-1,a_{r+1},\ldots,a_n)&\mbox{if }a_r=0\mbox{
and }a_i=q-1\mbox{ for all }i\not=r\\
(a_1,\ldots,a_{r-1},a_r+1,a_{r+1},\ldots,a_n)&\mbox{if
}a_r<q-3\mbox{
and }a_i=q-1\mbox{ for all }i\not=r\\
(a_1,\ldots,a_{r-1},0,a_{r+1},\ldots,a_n)&\mbox{if }a_r=q-1\mbox{
and }a_i=q-3\mbox{ for all }i\not=r\\
a&\mbox{otherwise.}\\
\end{array}\right.
$$

Since $\pi_1$ is a product of a cycle of length 3, an odd number of
cycles of length $q$ and a transposition it is an even permutation.
For $2\leq r\leq n$ we have that $\pi_r$ is a product of even cycles
and is therefore an even permutation.

The argument now concludes in the same way as Lemma \ref{AltLemma}.
\end{proof}

\section{$l$-ary instructions} \label{sec:l-ary}

Theorem 7 in \cite{GR11a} shows that the set of binary instructions only generates the affine group if $A = \GF(2)$. We characterise in Theorem \ref{th:l-ary} below what $l$-ary instructions generate in general.

\begin{theorem} \label{th:l-ary}
For all $2 \le l \le n$, let $G_l$ be the group generated by all $l$-ary instructions. We have the following:
\begin{itemize}
    \item For any $n$, $G_n = \Sym(A^n)$.

    \item If $q=2$, $G_2 = \Aff(n,2)$ for all $n$ and $G_3 = G_{n-1} = \Alt(A^n)$ for $n \ge 4$.

    \item If $n \ge 3$ and $q \ge 3$ is odd, then $G_2 = G_{n-1} = \Sym(A^n)$.

    \item If $n \ge 3$ and $q \ge 4$ is even, then $G_2 = G_{n-1} = \Alt(A^n)$.
\end{itemize}
\end{theorem}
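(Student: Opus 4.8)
The plan is to exploit the inclusion chain $G_2 \le G_3 \le \cdots \le G_{n-1} \le G_n = \Sym(A^n)$ and to reduce everything to one lower-bound construction. The top equality $G_n = \Sym(A^n)$ holds because the $n$-ary instructions are all of $\mathcal{I}$, which generates $\Sym(A^n)$ (Theorem 2.4 of \cite{GR11a}); the case $q=2$, $G_2 = \Aff(n,2)$, is Theorem 7 of \cite{GR11a}. The one general ingredient I need is a parity bound: if $q$ is even and $g$ is an $l$-ary instruction with $l \le n-1$, then $g$ is even. Writing $g$ as the update of register $j$ by a function of the variables indexed by a set $S \ni j$ with $|S| = l$, the permutation $g$ fixes every coordinate but the $j$-th, and on the $j$-th fibre it applies a permutation of $A$ depending only on $(x_i)_{i \in S \setminus \{j\}}$; hence each of the at most $q^{l-1}$ distinct fibre-permutations occurs exactly $q^{n-l}$ times, and $\operatorname{sgn}(g) = \prod_{\sigma} \operatorname{sgn}(\sigma)^{q^{n-l}} = 1$ since $q^{n-l}$ is even. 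Thus $G_{n-1} \le \Alt(A^n)$ whenever $q$ is even, and it remains only to prove the matching lower bounds: $G_2 \supseteq \Alt(A^n)$ for $q \ge 3$ (sharpened to $\Sym(A^n)$ when $q$ is odd), and $G_3 \supseteq \Alt(A^n)$ for $q = 2$ and $n \ge 4$. Feeding each lower bound back into the chain then forces $G_2 = G_{n-1}$ with the asserted value.

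For $q \ge 3$ I would reduce the lower bound to the elementary $3$-cycles. Exactly as in the equivalence-relation argument already used to prove that $\Alt(A^n)$ is internally computable for $q \ge 3$, it suffices to show that $G_2$ contains every $3$-cycle $(x,\, x+\lambda e^i,\, x+\mu e^i)$ on three states differing only in one register; the relation $\sim_1$ is then universal and $G_2$ contains all $3$-cycles, hence $\Alt(A^n)$. To produce such a localised $3$-cycle from binary instructions I would gate one register at a time by commutators. Start with the binary instruction $P$ that applies a fixed $3$-cycle $\rho \in \Sym(A)$ to register $1$ precisely when $x_2 = 0$, and the binary instructions $R_k$ that apply a transposition $\tau \in \Sym(A)$ to register $1$ precisely when $x_k = 0$. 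Since $\Sym(A)$ is nonabelian for $q \ge 3$, one may choose the relevant $3$-cycle and transposition on register $1$ not to commute; then $[P, R_3]$ acts on register $1$ as a $3$-cycle exactly on the states with $x_2 = x_3 = 0$ and trivially elsewhere, and commuting successively with $R_4, \ldots, R_n$ (each time with a transposition chosen not to commute with the current $3$-cycle) imposes the further conditions $x_4 = \cdots = x_n = 0$, reducing the support by a factor $q$ each step. The final commutator is a single $3$-cycle on the three states with $x_1 \in \operatorname{supp}(\rho)$ and $x_2 = \cdots = x_n = 0$; relabelling registers and choosing $\rho$ freely then yields all elementary $3$-cycles. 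The answer is separated by parity: for even $q$ the parity bound gives $G_2 \le \Alt(A^n)$, so $G_2 = \Alt(A^n)$; for odd $q$ the binary instruction swapping $x_1 \in \{0,1\}$ whenever $x_2 = 0$ is a product of $q^{n-2}$ transpositions, hence an odd permutation, so $G_2 = \Sym(A^n)$.

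The case $q = 2$ needs a genuinely different localisation, since $\Sym(\GF(2))$ is abelian: all binary instructions are affine (whence $G_2 = \Aff(n,2)$), every commutator of the above shape is trivial, and there are no three states differing in a single register, so the $\sim_1$ argument does not even begin. The plan here is to run the analogous scheme with ternary instructions: a single ternary instruction can condition an update on two registers at once, and commutators of ternary instructions can be made to localise a $3$-cycle on a bounded set of states, after which a $\GF(2)$-adapted connectivity argument (replacing $\sim_1$ by an equivalence generated by the available small $3$-cycles) shows that these generate all of $\Alt(A^n)$; together with the parity bound $G_3 \le \Alt(A^n)$ this gives $G_3 = \Alt(A^n)$ for $n \ge 4$. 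The excluded value $q = 2$, $n = 3$ is consistent with $G_3 = G_n = \Sym(A^3)$ and is checked directly.

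The hard part will be the explicit localisation. The obstruction is structural: any element assembled from instructions that together involve only a proper subset $T$ of the registers acts identically across all values of the registers outside $T$, and so is a product of $q^{\,n - |T|}$ disjoint copies of one pattern rather than a single short cycle. Producing a single $3$-cycle therefore forces one to chain conditional updates across all $n$ registers and cancel the unwanted copies, and the commutator telescoping above is the mechanism that does this for $q \ge 3$; verifying that each commutator has exactly the claimed support, and carrying out the corresponding ternary telescoping and connectivity step for $q = 2$, is the technical core on which the whole theorem rests.
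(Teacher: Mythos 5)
Your treatment of $q\ge 3$ is correct and takes a genuinely different route from the paper's. The paper proceeds by induction on $n$: a rows-and-columns argument at $n=2$ makes the transposition relation $\sim$ universal; at $n=3$ it uses diagonally embedded transpositions from the $n=2$ case and the fact that the commutator of two transpositions meeting in one point is a $3$-cycle; and for $n\ge4$ it splits into sub-cases $q\ge5$, $q=3$ and $q=4$, each with its own commutator trick. Your telescoped commutator $[\cdots[[P,R_3],R_4],\ldots,R_n]$ of gated binary instructions replaces all of that: since every factor updates register $1$, everything decomposes fibrewise over $(x_2,\ldots,x_n)$, the gates intersect to the single condition $x_2=\cdots=x_n=0$, and on the surviving fibre one gets $[\rho,\tau]$, which is again a $3$-cycle whenever the $3$-cycle $\rho$ and the transposition $\tau$ fail to commute (supports sharing two points give $[\rho,\tau]=\rho$; sharing one point gives another $3$-cycle). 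Gating on arbitrary constants and permuting the updated register yields every $3$-cycle on a triple of states differing in one register, the $\sim_1$ principle then gives $\Alt(A^n)\le G_2$, and your parity separation (a binary instruction that is a product of $q^{n-2}$ transpositions is odd iff $q$ is odd; $G_{n-1}\le\Alt(A^n)$ when $q$ is even) finishes. This part is routine to verify and is arguably cleaner than the paper's inductive case analysis.

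However, the case $q=2$, $G_3=\Alt(\GF(2)^n)$ for $n\ge4$, is a genuine gap: you offer a plan, not a proof. The paper disposes of this case by citing Kantor and McDonough \cite{KM74}, from which the maximality of the affine group inside the alternating group follows: since $\Aff(n,2)=G_2\le G_3\le\Alt(2^n)$ (the sandwich uses your parity bound together with $\Aff(n,2)\le\Alt(2^n)$ for $n\ge3$), and $G_3$ contains a non-affine instruction such as $y_1\gets x_1+x_2x_3$, maximality forces $G_3=\Alt(2^n)$. You have nothing playing that role, and the obstruction you yourself record is fatal to any ``analogous scheme'': for $q=2$ the fibre group $\Sym(\GF(2))$ is abelian, so every commutator of elements updating the same register is trivial, and the entire mechanism of your $q\ge3$ argument vanishes. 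Localisation must then come from commutators of instructions updating \emph{different} registers, whose supports do not decompose fibrewise, and your proposal says nothing about how to control them.

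To make the gap concrete: for $n=4$ a repair does exist along your lines. The Toffoli instruction $t\colon y_1\gets x_1+x_2x_3$ is a product of two transpositions, the instruction $g\colon y_2\gets x_2+x_1x_4$ fixes three of the four points moved by $t$ and moves the fourth to a new point, so $[t,g]=t\,t^g$ is a single $3$-cycle; since $G_3\supseteq\Aff(4,2)$ is $2$-transitive, hence primitive, Jordan's theorem and the parity bound give $G_3=\Alt(2^4)$. But for $n\ge5$ any ternary instruction that moves even one point of the support of $t$ moves at least $2^{n-4}\ge2$ of them (its moved set is a union of cylinders over conditions on two registers), so this one-step localisation fails, and one needs either an induction on $n$ with the above as base case, or the maximality theorem the paper cites. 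Until something of that kind is supplied, the second bullet of the theorem is unproven in your write-up.
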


\begin{proof}
Note that, any $(n-1)$-ary instruction has the property
that all its cycle lengths have multiplicities which are multiples of
$q$, since $1$ variable has no effect. So if $q$ is even,
every instruction is an even permutation, and the group $G_{n-1}$ is contained in the
alternating group.

When $q=2$, $G_2$ was settled in \cite[Theorem 5.8]{GR11a} and it directly follows from \cite{KM74} that $G_3 = \Alt(\GF(2)^n)$.

We now assume $q \ge 3$ and it suffices to prove the result for $G_2$.

We use the following principles. Let $G$ be a permutation group.
(a) Define a relation $\sim$ by $x\sim y$ if $x=y$ or the transposition
$(x,y)$ is in $G$. Then $\sim$ is an equivalence relation. If it is the
universal relation, then $G$ contains every transposition, and is the
symmetric group.
(b) Define a relation $\sim_1$ by $x\sim_1 y$ if $x=y$ or there exists $z$ such
that the $3$-cycle $(x,y,z)$ is in $G$. Again $\sim_1$ is an equivalence
relation, and it has the property that, if $x\sim_1 y$ and $y\sim_1 z$, then
$(x,y,z)\in G$. So if $\sim_1$ is the universal relation, then $G$ contains
every $3$-cycle, and is the alternating group.

Consider $n=2$. There are two kinds of instruction, those that update $y_1$
and those that update $y_2$. Thinking of the points being permuted as forming
a square grid, instructions of the first time form a group which is the
direct product of symmetric groups on the columns; instructions of the second
type form a group which is the direct product of symmetric groups on the
rows. So the relation $\sim$ is non-trivial, and any two points in the same
row or column are equivalent. Thus $\sim$ is the universal relation, and $G$
is the symmetric group.

Now suppose that $n=3$ and $q\ge3$. By the $n=2$ case, we see that every
permutation fixing
the first coordinate of all triples is in $G$. In particular, the permutation
transposing $(x,1,1)$ with $(x,1,2)$ for all $x$ belongs to $G$. Similarly
the permutation transposing $(1,y,1)$ and $(1,y,0)$ for all $y$ is in $G$.
(Here we use $q\ge3$.) Now in these permutations, the cycles containing
$(1,1,1)$ intersect; the other cycles are disjoint. So the commutator of the
two permutations is a $3$-cycle. Now applying the argument about $3$-cycles
shows that $G$ is symmetric or alternating. If $q$ is even, it is alternating;
if $q$ is odd, since we have a product of $q$ transpositions (an odd
permutation), $G$ is symmetric.

Finally, assume $n\ge4$ and $q \ge 5$. Argue as before but using $3$-cycles rather than
transpositions; the condition $q\ge5$ allows us to have $3$-cycles meeting
in a single point, so their commutator is a $3$-cycle.

For $q=3$, by induction we get the symmetric group at the previous stage,
and so we have transpositions, and can play the usual game.

For $q=4$, the commutator trick gives us a permutation $t$ which interchanges, say,
$(1,\ldots,1,1,1)$ with $(1,\ldots,1,1,2)$, and $(1,\ldots,1,1,3)$ with
$(1,\ldots,1,1,0)$. Now there is an instruction $g$ involving the last two
coordinates which fixes three of these points and maps the fourth to
$(1,\ldots,1,2,0)$. Conjugating $t$ by $g$ gives a permutation which is
the product of two transpositions, the first the same as in $t$, the
second swapping $(1,\ldots,1,1,3)$ with $(1,\ldots,1,2,0)$. Now the
commutator of $t$ and $t^g$ is a $3$-cycle on $(1,\ldots,1,1,3)$,
$(1,\ldots,1,1,0)$ and $(1,\ldots,1,2,0)$. So the relation $\sim_1$
is non-trivial, and points which differ in only one coordinate are
equivalent. So it is the universal relation, and we are done.
\end{proof}

We would like to emphasise the importance of Theorem \ref{th:l-ary} for the possible implementation of memoryless computation. Recall that any boolean function $f: \{0,1\}^n \to \{0,1\}$ can be computed by using binary gates (NAND suffices, in fact). This shows that any function can be computed ``locally''. However, when we want to compute an odd boolean permutation $g: \{0,1\}^n \to \{0,1\}^n$ without memory, Theorem \ref{th:l-ary} implies that we need an $n$-ary instruction to compute it. In other words, $g$ cannot be computed ``locally''.

The main impact of this result (more specifically, its generalisation to any alphabet of even cardinality) is about the design of instruction sets. A CPU core cannot perform any possible operation on all its registers. The typical approach is to use assembly instructions which only work on two or three registers at once (i.e., binary or ternary instructions). However, in any implementation of a core which computes without memory, the instruction set must contain at least one instruction which uses all registers at once. The typical approach to the design of instruction sets is thus inappropriate to memoryless computation.

\section{Perspectives} \label{sec:open}

First of all, the reader is reminded of the conjectures on fast groups in Section \ref{sec:preliminaries}. More generally, memoryless computation, despite some previous work, remains a dawning topic. Many problems, especially of computational nature, arise in this area. For instance, we know that any permutation can be computed by a program with at most $2n-1$ instructions. However, how hard is it to determine such a program computing a given permutation? This problem has a linear analogue: any matrix can be computed in $\lfloor 3n/2\rfloor$ instructions \cite{CFG12a}, but the complexity of determining those instructions remains unknown.

Also, in this paper we started the investigation of smaller instruction sets which could still provide the advantages of memoryless computation. In particular, we determined the smallest cardinality of a generating set of instructions. Clearly, using such instruction set will yield very long programs.  Can we derive results on the complexity of permutations when the smallest generating sets of instructions are used? We also studied the use of $l$-ary permutations and showed that they were not sufficient in general. Can we bound the complexity of permutations when binary instructions are sufficient? Furthermore, which other generating sets of instructions could be proposed for memoryless computation, and how good are they in terms of size and complexity?

Finally, an interesting application of our results on memoryless computation is in bioinformatics, more precisely in the modelling of gene regulatory networks, introduced in \cite{Kau69} and \cite{Tho80} (see \cite{KS08} and \cite{PR12} for reviews on this topic). A network of $n$ genes interacting with one another is typically modelled as follows. To each gene $i$ ($1 \le i \le n$) are associated the following:
\begin{itemize}
	\item Firstly, a variable $x_i$, called its state, taking a value in a finite alphabet $A$ of size $q$, which indicates the level of activation of the gene (usually, $q=2$, hence the common notation of a Boolean network).
	\item Secondly, and an update function $f_i : A^k \to A$ which depends on the values of some genes $j_1,\ldots,j_k$ that influence its level of activation: $f_i(x_{j_1}, \ldots, x_{j_k})$.
\end{itemize}	
In general, the order in which the genes update their state, referred to as the update scheme, is unknown. In some models, all the genes are assumed to update their state synchronously, i.e. all at the same time (this is the so-called parallel update scheme). In other cases, the updates can be done asynchronously, in particular, they can be assumed to be updated one after the other (this is the so-called serial update scheme). It is clear that memoryless computation corresponds to the serial update scheme, also called a sequential dynamic system \cite{MR08}, where an update of gene $i$ corresponds to the instruction
$$
	y_i \gets f_i(y_{j_1}, \ldots, y_{j_k}).
$$ 
One major question is to determine whether any generality is lost by considering one kind of update schedule over another. In this paper, we have proved in Theorem \ref{th:sym_alt} (a) the universality of the serial update for permutations. Indeed, for any $A$ and any $n$ (apart from the degenerate case $q = n = 2$), there exists a gene regulatory network $f_1,\ldots,f_n$ which can generate any possible permutation of $A^n$ in its serial update scheme, i.e. for any $g = (g_1,\ldots,g_n) \in \Sym(A^n)$, there exists a word $(i_1,i_2,\ldots, i_L) \in \{1,\ldots,n\}^L$ such that successively updating the states of genes $i_1, i_2,\ldots,i_L$ eventually yields the state $(g_1(x),\ldots,g_n(x))$.



\end{document}